\documentclass[11pt,reqno]{amsart}

\pdfoutput = 1

\usepackage[english]{babel}
\usepackage{amsmath,amsfonts,amssymb,amsthm}
\usepackage{amsfonts}
\usepackage{amsxtra}






\usepackage{array,dcolumn}
\usepackage{graphicx}
\usepackage[hyperfootnotes=true,
        pdffitwindow=true,
        plainpages=false,
        pdfpagelabels=true,
        pdfpagemode=UseOutlines,
        pdfpagelayout=SinglePage,
        hyperindex,]{hyperref}

\usepackage[T1]{fontenc}
\usepackage[utf8]{inputenc}
\usepackage[activate={true,nocompatibility},spacing,kerning]{microtype}
\usepackage[sc,osf]{mathpazo}
\linespread{1.0425}
\microtypecontext{spacing=nonfrench}

 \setlength{\textwidth}{6.1in}
 \calclayout

\newcommand{\mathsym}[1]{{}}
\newcommand{\unicode}[1]{{}}

\numberwithin{equation}{section}

\theoremstyle{plain}
\newtheorem{theorem}{Theorem}

\newtheorem{corollary}[theorem]{Corollary}
\newtheorem{proposition}[theorem]{Proposition}
\numberwithin{theorem}{section}

\theoremstyle{definition}

\theoremstyle{remark}
\newtheorem{remark}[theorem]{Remark}

\setlength{\parindent}{1.5em}


\begin{document}
\title[Circular $\beta$-ensemble distributions and joint moments]{Joint moments of a characteristic polynomial and its derivative for
the circular $\beta$-ensemble}
\author{Peter J. Forrester}
\address{School of Mathematics and Statistics, 
ARC Centre of Excellence for Mathematical
 and Statistical Frontiers,
University of Melbourne, Victoria 3010, Australia}
\email{pjforr@unimelb.edu.au}

\begin{abstract}
The problem of calculating the scaled limit of the joint moments of the characteristic polynomial,
and the derivative of the characteristic polynomial, for matrices from the unitary group with Haar
measure first arose in studies relating to the Riemann zeta function in the thesis of Hughes.
Subsequently, Winn showed that these joint moments can equivalently be written as the moments
for the distribution of the trace in the Cauchy unitary ensemble, and furthermore relate to certain hypergeometric functions
based on Schur polynomials, which enabled explicit computations.
 We give a $\beta$-generalisation of these results, where now the role of the Schur
polynomials is played by the Jack polynomials. This leads to an explicit evaluation of the scaled moments and
the trace distribution for all $\beta > 0$,
subject to the constraint that a particular parameter therein is equal to a non-negative integer. Consideration
is also given to the calculation of the moments of the singular statistic $\sum_{j=1}^N 1/x_j$ for the Jacobi
$\beta$-ensemble.
\end{abstract}
\maketitle

\section{Introduction}\label{s1}
\subsection{The particular circular $\beta$-ensemble joint moments and statement of their scaled large $N$ form}
With motivations in the application of random matrix theory to the study of statistical properties
of the Riemann zeta function in accordance with the
hypothesis of Keating and Snaith \cite{KS00a}, Hughes \cite{Hu01} considered the quantity
\begin{equation}\label{1.1}
\mathcal F_N(s,h) = \int_{U} \Big |V_U(\theta)  |_{\theta = 0} \Big |^{2(s-h)}  
\Big | {d \over d\theta} V_U(\theta) \Big |_{\theta = 0}  \Big |^{2h} \, d \mu_N(U), \qquad - {1 \over 2} < h < s + {1 \over 2}.
\end{equation}
In (\ref{1.1}), $U \in U(N)$, the group of $N \times N$ unitary matrices, which are to be chosen with
Haar measure. The latter is indicated in (\ref{1.1}) by the symbol $d \mu_N(U)$, which furthermore is
to be normalised so that its integration over $U$ gives unity. With $\{ e^{i \theta_j} \}_{j=1}^N$ denoting the
eigenvalues of $U$, the function $V_U(\theta)$ appearing in (\ref{1.1}) is specified by
\begin{equation}\label{1.2}
V_U(\theta) = e^{i N (\theta + \pi)/2 - i \sum_{j=1}^N \theta_j/2 } \prod_{j=1}^N (1 - e^{- i (\theta - \theta_j)});
\end{equation}
note in particular that (\ref{1.2}) is real for $\theta$ real, and its absolute value is equal to the
absolute value of the characteristic polynomial of $U$.

The integrand of (\ref{1.1}) is a function of the eigenvalues of $U$ only. Since the work of
Weyl \cite{We39} --- see the review \cite{DF17} for historical context --- it has been known that the
Haar measure $d \mu_N(U)$, upon changing variables to the eigenvalues and eigenvectors,
decomposes to an eigenvalue dependent factor, and an eigenvector dependent factor.
This allows the latter to be integrated over, reducing (\ref{1.1}) to read
\begin{multline}\label{1.3}
\mathcal F_N(s,h) =  {1 \over N! (2 \pi)^N} \int_0^{2 \pi} d \theta_1 \cdots  \int_0^{2 \pi} d \theta_N \,  \Big |V_U(\theta)  |_{\theta = 0}  \Big |^{2(s-h)} 
\Big | {d \over d\theta} V_U(\theta) \Big |_{\theta = 0}  \Big |^{2h} \\
\times 
 \prod_{1 \le j < k \le N} | e^{i \theta_k}  - e^{i \theta_j} |^2.
\end{multline}

There is a natural generalisation of (\ref{1.3}) obtained by replacing the power of $2$ in the product
over $j < k $ by the power of $\beta$,
\begin{multline}\label{1.3a}
\mathcal F_{N,\beta}(s,h) =  {1 \over C_{N,\beta} } \int_0^{2 \pi} d \theta_1 \cdots  \int_0^{2 \pi} d \theta_N \,  \Big |V_U(\theta)  |_{\theta = 0} \Big |^{2(s-h)} 
\Big | {d \over d\theta} V_U(\theta) \Big |_{\theta = 0}  \Big |^{2h} \\
\times 
  \prod_{1 \le j < k \le N} | e^{i \theta_k}  - e^{i \theta_j} |^\beta,
\end{multline}
where
\begin{equation}\label{1.3b}
C_{N,\beta} = (2 \pi)^N {\Gamma(\beta N/2 + 1) \over (\Gamma(1+\beta/2))^N}.
\end{equation}
When multiplied by $1/ C_{N,\beta}$, the product involving $\beta$ in (\ref{1.3a}) for $\beta = 1$ and 4  is the eigenvalue probability density
function (PDF) for 
particular random matrix ensembles constructed out of unitary matrices $U$ with Haar measure. For $\beta = 1$, the construction
consists of forming $U^T U$; for $\beta = 4$, $U$ of size $2N \times 2N$, and $Z_{2N}$ the $2N \times 2N$ anti-symmetric
tridiagonal with entries $-1$ ($1$) on the leading upper (lower) sub-diagonal, the construction is to form $U^D U$ where
$U^D = Z_{2N}^{-1} U^T Z_{2N}$ and to consider only distinct eigenvalues (in this case the spectrum is doubly degenerate).
These two ensembles, together with the Haar distributed unitary matrices themselves, are the three circular ensembles of Dyson \cite{Dy62}.
Many years later it shown in \cite{KN04} how to construct certain random unitary Hessenberg matrices which give rise to the
eigenvalue PDF implied by (\ref{1.3}) for general $\beta > 0$. We remark that the parameter $\beta$ is often referred to as
the Dyson index.

Our primary interest in the present paper is to extend a number of identities associated with (\ref{1.3}),
which culminate in an explicit determination of the scaled large $N$ limit for $s$ a positive integer and continuous $h$,
 to the general $\beta > 0$ case (\ref{1.3a}). 
The known identities  can be found in the work of Winn \cite{Wi12}, as can the determination of the large $N$ limit,
provided $h$ is restricted to non-negative integer or half integer values. The very recent works \cite{AKW20,ABGS20a} give a
viewpoint which allows the working in \cite{Wi12} to be adapted to continuous values of $h$.
Applying this to the present setting,
our results in the general $\beta > 0$ case similarly allow for the large $N$ form of (\ref{1.3a}) for continuous $h$ to be determined.

\begin{theorem}\label{P1A}
With $s$ a non-negative integer, let $\kappa= (\kappa_1,\dots,\kappa_s)$ denote a partition of $s$ parts. Define the generalised Pochhammer symbol by
(\ref{C0}) below. Let $C_\kappa^{(\alpha)}(x_1,\dots.x_s)$ denote the Jack polynomial indexed by the partition $\kappa $ and
depending on the parameter $\alpha$, normalised so that $C_\kappa^{(\alpha)}((1)^s) :=
C_\kappa^{(\alpha)}(x_1,\dots.x_s) |_{x_1 = \cdots = x_s = 1}$ is given by (\ref{2.6+}) below. For $\beta > 0$ and
with  
\begin{equation}\label{R}
-1/2 < {\rm Re} \, h  < s + 1/2
 \end{equation}
 we have
 \begin{multline}\label{1.91}
 \lim_{N \to \infty} {1 \over N^{2 s^2/\beta + 2h} }\mathcal{F}_{N,\beta}(s,h) =
 {1 \over 2^{2h} } 
    \prod_{j=1}^s  {\Gamma(2j/\beta) \over \Gamma(2(s+j)/\beta)} \\
  \times  {1 \over \cos \pi h} \sum_{\kappa} {(-2 h)_{|\kappa|} \over | \kappa |! [4 s/ \beta]_\kappa^{(\beta/2)} }
 \Big ( {4 \over \beta} \Big )^{|\kappa|} C_\kappa ^{(\beta/2)}  ( (1)^s ).
 \end{multline}
\end{theorem}

\begin{remark}
1.~In the case $s = 0$ the sum over $\kappa$ is empty, and (\ref{1.91}) reduces to
\begin{equation}\label{1.92}
 \lim_{N \to \infty} {1 \over N^{ 2h} }\mathcal{F}_{N,\beta}(s,h)  \Big |_{s=0} =   {1 \over 2^{2h} }  {1 \over \cos \pi h} , \quad -1/2 < h < 1/2,
 \end{equation}
independent of $\beta$. \\
2.~In the case $s = 1$, $ C_\kappa ^{(\beta/2)}  ( (1)^s ) = 1$ and the  generalised Pochhammer symbols reduce to their
classical counterparts. Then (\ref{1.91}) simplifies to read
\begin{equation}\label{1.92a}
 \lim_{N \to \infty} {1 \over N^{(2/\beta) +  2h} }\mathcal{F}_{N,\beta}(s,h)  \Big |_{s=1} =   {1 \over 2^{2h} } { \Gamma(2/\beta) \over \Gamma(4/\beta) }  {1 \over \cos \pi h} 
 \, {}_1 F_1 (-2h; 4/\beta;4/\beta), \quad -1/2 < h < 3/2.
 \end{equation}
 At $h=1/2$ this is to be computed as a limit to this value, which gives
 \begin{align}\label{1.92b}
 \lim_{N \to \infty} {1 \over N^{(2/\beta) +  1} } & \mathcal{F}_{N,\beta}(s,h)  \Big |_{s=1, h= 1/2}  \nonumber \\ & =    { \Gamma(2/\beta) \over \Gamma(4/\beta) } \bigg ( {1 \over \pi} - {1 \over 2 \pi} {1 \over 1 + \beta/4} \,
 {}_2 F_2(1,1;3,4/\beta+2;4/\beta)  \bigg )  \nonumber \\
 & = \begin{cases}
 (3 e^4 - 103)/(768 \pi) ,& \beta = 1, \\
 (e^2 - 5)/(4 \pi), & \beta = 2, \\
 (e - {2})/\sqrt{\pi}, & \beta = 4.
 \end{cases}
 \end{align} 
 Here the explicit values have been determined from closed forms of
 ${}_2 F_2$ found using computer algebra. The value for $\beta = 2$ was first derived in this context by
 Winn \cite{Wi12}, and most  recently by Assiotis et al.~\cite{ABGS20a} using ideas extending those in  \cite{Wi12}. In the 
 setting of joint moments of the Hardy function in the theory of the Riemann zeta function motivating the
 study of (\ref{1.1}), the value $ (e^2 - 5)/(4 \pi)$ was earlier
 known from the work of Conrey and Ghosh \cite{CG84}. Combining this, and the Keating--Snaith hypothesis relating averages of the
 Riemann zeta function to averages over $U(N)$, the statement of (\ref{1.92b}) with $\beta = 2$ first appeared as a conjecture
 in the thesis of Hughes \cite{Hu01}. \\
 3.~For $h$ a non-negative integer in (\ref{1.91}) the sum over partitions can
be restricted to partitions of size $|\kappa| \le 2h$, and so terminates. In the case $h=0$ the sum is empty and
is to be interpreted as equal to unity. The resulting formula then corresponds to the leading asymptotics of
the moments $\langle |V_U(\theta) |_{\theta = 0}|^{2s} \rangle$,
where the average is with respect to the circular $\beta$-ensemble, and is consistent with results known from the earlier works  \cite{BF97f,KS00a}.
The second line of  (\ref{1.91}) is in fact a rational function of $s$ for $h$ a non-negative integer. We show in
Remark \ref{R3.6} below that the validity of this rational function formula extends to all $s > h - 1/2$.
 \end{remark}

 As a secondary point, discussed in \S \ref{S4},
we identify a structurally very similar class of moments to $\mathcal F_{N,\beta}(s,h)$ for the
Jacobi $\beta$-ensemble, although now parametrised by a single exponent $h$, rather than
two exponents $(s,h)$  as in   (\ref{1.3a}). These are specified by
\begin{multline}\label{1.7+}
\mathcal{H}_N^{(J)}(h) = {1 \over S_N(a,b,\beta/2) } \int_{0}^1 du_1 \cdots  \int_{0}^1 du_N \,
\prod_{j=1}^N
\Big |   {d \over dt} \log R(t) \Big |_{t = 0}  \Big |^{h} \\
\times 
\prod_{l=1}^N u_l^a (1 - u_l)^b   \prod_{1 \le j < k \le N} | u_k  - u_j |^\beta,
\end{multline}
where
\begin{equation}\label{1.7a+}
R(t) := \prod_{l=1}^N (t -  u_l)
\end{equation}
and the normalisation, which is referred to as the Selberg integral, has the gamma function
evaluation
\begin{equation}
S_N(\lambda_1,\lambda_2,\lambda) =
 \prod_{j=0}^{N-1} {\Gamma (\lambda_1 + 1 + j\lambda)
\Gamma (\lambda_2 + 1 + j\lambda)\Gamma(1+(j+1)\lambda) \over
\Gamma (\lambda_1 + \lambda_2 + 2 + (N + j-1)\lambda) \Gamma (1 + \lambda )}.
\label{3.2g}
\end{equation}
Our working culminates
in the evaluation of the scaled limit for $h$ a positive integer; see Corollary \ref{C4}.

\subsection{Relationship to a trace distribution for the Cauchy $\beta$-ensemble and consequences}
After recalling (\ref{1.2}) we see that the integrand in (\ref{1.3}) with $h=0$,
under the stereographic projection  mapping $\cot (\theta_j/2) = x_j$ ($j=1,\dots,N)$
 from the unit circle to the real line, transforms according to \cite[Eq.~(3.123)]{Fo10}
\begin{equation}\label{1.6z}
\prod_{l=1}^N   |1 - e^{i \theta_l} |^{2s}   |\Delta_N( \mathbf e^{\mathbf i \mathbold \theta} ) |^\beta d \mathbold \theta =
2^{\alpha_0 - 2s} \prod_{j=1}^N {1 \over (1 + x_j^2)^{\alpha_1}}   |\Delta_N( \mathbf x ) |^\beta \, d \mathbf x,
\end{equation}
where, with $\mathbf x = \{ x_j \}_{j=1}^N$, $\Delta_N(\mathbf x) := \prod_{1 \le j < k \le N} (x_k - x_j)$ and
\begin{equation}\label{1.6v}
\alpha_0 = \beta N (N - 1)/2 + N (1 + 2s) - 2 h, \qquad \alpha_1 = \beta (N - 1)/2 + 1 + s.
\end{equation}
The PDF corresponding to the RHS of (\ref{1.6z})
is known in random matrix theory as the
 symmetric Cauchy
$\beta$-ensemble.

We will see in Proposition \ref{p2.1} below that $ \mathcal{F}_{N,\beta}(s,h)$ has the
interpretation as the moment $\langle |T|^h \rangle$ of the trace statistic 
\begin{equation}\label{TT.1}
T = \sum_{j=1}^N x_j
\end{equation}
for the  symmetric Cauchy
$\beta$-ensemble. In fact in 
deducing Theorem \ref{P1A} we will make use of knowledge of the explicit form 
of the scaled PDF of the trace distribution in this ensemble, $P_{\infty,s}^{(Cy)}(T)$ say.

\begin{theorem}\label{T2}
Let the PDF of the trace statistic (\ref{TT.1}) for the symmetric Cauchy
$\beta$-ensemble corresponding to the RHS of (\ref{1.6z}) be denoted
$P_{N,s}^{(Cy)}(T)$, and define
 \begin{equation}\label{3.1}
{P}_{\infty,s}^{(Cy)} (T)       = \lim_{N \to \infty} N {P}_{N,s}^{(Cy)} (N T).
 \end{equation} 
 For $s \in \mathbb Z_{\ge 0}$ we have
 \begin{equation}\label{3.2x}
 {P}_{\infty,s}^{(Cy)} (T) 
  = {1 \over \pi} {\rm Re} \,    \int_{0}^\infty   e^{-t(1  + i T)} \,  {}_0^{} F_1^{(\beta/ 2)} ( \underline{\: \: \: } \, ; 4 s /\beta ; (4t/\beta)^s ) \, dt.
  \end{equation} 
  Here ${}_0^{} F_1^{(\beta/ 2)}$ is a particular hypergeometric function based on Jack polynomials as defined by
  (\ref{3.40}) below, and where the notation $(4t/\beta)^s$ indicates that each argument $x_i$, $i=1,\dots,s$ is equal
  to $4t/\beta$.
  \end{theorem}
  
  It turns out that knowledge of ${P}_{\infty,s}^{(Cy)} (T) $ is of independent interest.
  To see why, we first make note of the fact that for $\beta = 1,2$ and 4 the RHS of 
  (\ref{1.6z}) is proportional to the eigenvalue PDF on the space Hermitian
  matrices $\{ H \}$ with real $(\beta  = 1)$, complex ($\beta = 2$) and real quaternion
  ($\beta = 4$) chosen according to a distribution with PDF proportional to
  \begin{equation}\label{H1}
  \Big ( \det ( \mathbb I_N + H^2) \Big )^{-\beta(N - 1)/2 - 1 - s};
  \end{equation}  
 see e.g.~\cite[\S 2.5]{Fo10}.
 Matrices from this ensemble have the special property that the sub-matrix obtained by
 deleting $n$ ($n < N$) rows and $n$ columns, also has the distribution implied
 by (\ref{H1}), but with $N$ replaced by $N - n$
 \cite{Hu63, Br95}. This property underlies the Cauchy ensemble being an invariant
 measure --- referred to as a Hua--Pickrell measure ---
 on the space of infinite Hermitian matrices with the action of $U(\infty)$ by
 conjugation; see \cite{BO01} for $\beta = 2$ and \cite{AN19} for $\beta = 1,4$.
 The latter reference shows how a reformulation in terms of the Dixon-Anderson conditional
 PDF (see \cite[Prop.~4.2.1]{Fo10}) extends this to general $\beta > 0$.
 
 On the other hand, a classification of measures which are ergodic under this action --- the
 so-called extremal measures --- have been given in the work of Pickrell 
 \cite{Pi91} and Olshanksi--Vershik \cite{OV96} for $\beta = 2$, and extended to general
 $\beta > 0$ by Assiotis and Najnudel \cite{AN19}, in terms of real parameters
 $( \{ \alpha_i^+ \}_{i=1}^\infty,  \{ \alpha_i^- \}_{i=1}^\infty, \gamma_1, \gamma_2 \}$.
 Moreover an arbitrary invariant measure with respect to the action can, by use of the Fourier
 transform, be decomposed in ergodic components. The task taken up by Borodin and Olshanksi
 \cite{BO01} was to specify the ergodic components, and thus the set of real parameters, in the
 case of the $\beta = 2$ symmetric Cauchy ensemble. This was successfully carried out for the parameters
 $\{ \alpha_i^\pm \}_{i=1}^\infty$. Later Qiu \cite{Qi17} showed that $\gamma_2 = 0$, and
 that $\gamma_1$ has the distribution ${P}_{\infty,s}^{(Cy)} (T)  |_{\beta = 2}$, i.e.~is the
 limiting trace statistic for the $\beta = 2$ symmetric Cauchy ensemble.
 This latter characterisation was extended to general $\beta > 0$ in \cite{AN19}. Hence
 for the parameter $s$ in (\ref{H1}) a non-negative integer, the result of Theorem \ref{T2} has
 the intepretation of specifying the distribution of the real parameter $\gamma_1$ for general
 $\beta > 0$.

\subsection{Overview of the strategy of Winn in relation to $\mathcal F_N(s,h)$, and proposed $\beta$ modification}
The first identity of interest from  \cite{Wi12} gives that
\begin{multline}\label{1.4}
\mathcal{F}_N(s,h) = {2^{N^2 + 2 s N - 2h} \over (2 \pi)^N N!} \int_{\mathbb R^N} \,
\prod_{j=1}^N {1 \over (1 + x_j^2)^{N + s}} | x_1 + \cdots + x_N|^{2h} \Big ( \Delta_N(\mathbf x) \Big )^2 \, d \mathbf x,
\end{multline}
where $ \Delta_N(\mathbf x)$ is as in (\ref{1.6z}) and
$d \mathbf x = dx_1 \cdots dx_N$. This is extended to the general $\beta > 0$ case in \S \ref{S2.1}.
As noted in relation to (\ref{1.6z}),
with $h=0$ the integrand is familiar in random matrix theory as specifying the PDF
for the symmetric Cauchy ensemble with Dyson index $\beta = 2$; see e.g.~\cite[\S 2.5]{Fo10}. 

In the case of $h$ a non-negative integer, the exponential generating function for
the RHS of (\ref{1.4}) is the multiple integral
\begin{equation}\label{1.4a}
\mathcal H_N^{(Cy)} (s;t) :=  \int_{\mathbb R^N} \,
\prod_{j=1}^N {e^{i t x_j}  \over (1 + x_j^2)^{N + s}} \Big ( \Delta_N(\mathbf x) \Big )^2 \, d \mathbf x.
\end{equation}
Equivalently, up to normalisation this quantity is the characteristic function for the distribution of the
trace statistic (\ref{TT.1}).
In  \cite{Wi12} Winn  derives a transformation identity for (\ref{1.4a}), where it was expressed as
$e^{-N |t|}$ times a multiple integral which for $s$ a non-negative integer is polynomial of degree $sN$.
In the general $\beta$ case we don't know how to derive such an identity directly. However, for
$\beta$ even, using the residue calculus, we exhibit in Proposition \ref{P2.2}
that the structure of $ e^{-N |t|}$ times a 
polynomial of degree $sN$ persists. This simple analytic structure in the case
$s$ a non-negative integer 
 enables further calculation.

Winn proceeds
to show that the multiple integral specifying the polynomial can be identified with a certain generalised hypergeometric function
based on Jack polynomials (see \S \ref{S2.2} below for the definitions and notation), showing in particular that
for $s$ a non-negative integer
\begin{equation}\label{1.4b}
\mathcal H_N^{(Cy)}(s;t) \propto    e^{-N |t|}  \, {}_1^{} F_1^{(1)}(-s,-2s;(2 |t| )^N ),
\end{equation}
and furthermore making explicit the proportionality constant.
Using insights into averages over the Cauchy ensemble from the recent work \cite{FR20}
 we show in Proposition \ref{P2.2+}  how to derive (\ref{1.4b}) directly. Most importantly this working extends
 to general $\beta > 0$, allowing for the derivation of the general $\beta > 0$ analogue of (\ref{1.4b}).

 The generalised hypergeometric function $ {}_1^{} F_1^{(1)}$ in (\ref{1.4b}) is a function of $N$ variables.
 By interchanging the role of rows and columns in the underlying sum over partitions which
 defines this function, Winn obtains \cite[Eq.~(6.5)]{Wi12} the transformation identity
 \begin{equation}\label{1.4c}
  {}_1^{} F_1^{(1)}(-s,-2s;(2 |t| )^N ) =  {}_1^{} F_1^{(1)}(-N,2s;(-2 |t| )^s ).
 \end{equation}
 Most significant is that generalised hypergeometric function on the RHS is a function of $s$ variables, not $N$.
 After substituting in (\ref{1.4b}) this functional form underlies Winn's analysis, given in \cite[\S 6.2]{Wi12},
 of the large $N$ limiting form (\ref{1.91}) in the case $\beta = 2$, and with the assumption that $h$ is an
 integer or half integer within the required range (\ref{R}).
 As noted in \cite{Wi12}, the integer case results are the same
 as those derived by Dehaye \cite{De08,De10}, using symmetric function/ representation theoretic ideas
 closely related to the special function in (\ref{1.4b}); for recent extensions of the study of the characteristic
 polynomial for $U(N)$ along these lines see \cite{Ri18, BA20}.
 In our Proposition \ref{P2.7} we give the $\beta$ generalisation of (\ref{1.4c}) relevant to providing
  the $\beta$ generalisation of (\ref{1.4b}).
 
 The recent paper \cite{ABGS20a} places emphasis on the limiting scaled PDF for the trace
 statistic (\ref{TT.1}), as defined by (\ref{3.1}). However the details of the working relied not on the
 generalised hypergeometric formula (\ref{1.4c}) directly, but rather a determinant evaluation 
 in terms of Laguerre polynomials for $ {}_1^{} F_1^{(1)}$ in (\ref{1.4b}) as being proportional to
 \cite{FH94,Wi12}
\begin{equation}\label{1.6}
\det \Big [ L_{N+s - 1 - j - k}^{2s - 1} (2 |t|) \Big ]_{j,k=0,\dots,s-1},
\end{equation}
With $I_\nu(z)$ denoting the modified Bessel function, this gives 
\begin{equation}\label{1.6a}
\lim_{N \to \infty} {1 \over \tilde{C}_N} \mathcal H_N^{(Cy)}(s;t/2N) =
(-1)^{s(s-1)/2} {G(2s+1) \over (G(s+1))^2} \, {e^{-|t|/2} \over |t|^{s^2/2}} \,
\det \Big [ I_{j+k+1}(2 \sqrt{|t|} \Big ]_{j,k=0,\dots,s-1},
\end{equation}
where $ \tilde{C}_N = \tilde{C}_N(s,\beta)$ is chosen so that the LHS is unity for $t=0$. Taking its inverse Fourier
transform gives the limiting distribution implied by (\ref{1.4a}). Associated moments can now be
obtained by a direct evaluation. We remark that other essential uses of (\ref{1.6a}) in the present
context were already known from the earlier work
of Conrey, Rubinstein and Snaith \cite{CRS06}, and used subsequently in
\cite{FW06a,BBBGIIK19,BBBCPRS19,AKW20,ABGS20a}. 
Working with the the finite $N$ formula for the characteristic function based on the $\beta$
generalisation of (\ref{1.4c}) we are able to evaluate the limiting scaled PDF for the trace
 statistic as the inverse Fourier transform of the corresponding  characteristic function,
 which is our Theorem \ref{T2}.

Working with the generalised hypergeometric function
expression for the finite $N$ trace distribution, as implied by taking the inverse Fourier
transformation of the formula for the characteristic function based on (\ref{1.4c}),
a finite summation formula for the moments in the range (\ref{R}) can be obtained.
This is given in Corollary \ref{C3.5}. Its large $N$ form is readily computed, and
implies the result of Theorem \ref{P1A}.

\begin{remark}
For general $s$ the structure of (\ref{1.4a}) is more complex, as can already be seen from the case $N=1$.
Thus it is well known (see the recent work \cite{Ga19} for references and an independent derivation) that 
then (\ref{1.4a}) --- identified up to scaling of $t$ and name of the parameter $s$ as the characteristic
function of Student's $t$-distribution --- is proportional to $|t|^{s+1/2} K_{s+1/2}(|t|)$, where $K_{s+1/2}$ denotes
 the modified Bessel function of the second kind of order $s + 1/2$. This has the analytic structure of  $ e^{- |t|}$ times
 a linear
 combination of a power series in $|t|$ and $|t|^{2s+1}$ times a powers series in $t$. Also, the latter involves
 a logarithm for $s+1/2$ a nonnegative  integer. For $s$ a non-negative integer the
 former power series terminates to a polynomial of degree $s$, and the scalar multiplying the functional form of
 $|t|^{2s+1}$ times a powers series in $t$ vanishes. Further insight into the analytic structure present for continuous $s$ can
 be obtained by revisiting the known \cite{FW01a} Painlev\'e nonlinear differential equation characterising the scaled limit of $\mathcal H_N^{(Cy)} (s;t) $
 evaluated by (\ref{1.6a}).  This same differential equation has recently been shown to hold for continuous $s$ \cite{ABGS20a}.
 Results from \cite{FW06a} can be used to specify the small $t$ boundary condition for continuous $s$.
 This is done in the Appendix, which reveals a so-called Puiseux-type series expansion involving in general an infinite series of
 fractional powers $|t|^{p(2s+1)}$ ($p=1,2,\dots$), each times a power series in $|t|$.
 \end{remark}

\section{Transforming $\mathcal F_{N,\beta}(s,h)$}
\subsection{Generalising (\ref{1.4})}\label{S2.1}
Starting from the definition (\ref{1.3a}), we seek an identity  which generalises (\ref{1.4}).

\begin{proposition}\label{p2.1}
Let $C_{N,\beta}$ be given by (\ref{1.3b}), and make use of the notation $\Delta_N(\mathbf x)$
as specified below (\ref{1.6z}).
We have
\begin{multline}\label{1.6x}
\mathcal F_{N,\beta}(s,h) = {2^{\alpha_0} \over C_{N,\beta}} 
 \int_{-\infty}^\infty dx_1 \cdots  \int_{-\infty}^\infty dx_N \,
\prod_{j=1}^N {1 \over (1 + x_j^2)^{\alpha_1}} | x_1 + \cdots + x_N|^{2h} \Big | \Delta_N(\mathbf x) \Big |^\beta ,
\end{multline}
where
\begin{equation}\label{1.6v}
\alpha_0 = \beta N (N - 1)/2 + N (1 + 2s) - 2 h, \qquad \alpha_1 = \beta (N - 1)/2 + 1 + s.
\end{equation}
\end{proposition}

\begin{proof}
It turns out that in (\ref{1.6x}) the Dyson index $\beta$ does not play an essential role, and so we can follow
the working of \cite[Prop.~3.4]{Wi12} in the case $\beta = 2$. The first step is to note from the
definitions that
$$
{d \over d \theta} \log V_U (\theta) \Big |_{\theta = 0} = - {1 \over 2} \sum_{j=1}^N \cot {\theta_j \over 2}.
$$
Substituting in  (\ref{1.3a})  gives
\begin{equation}\label{1.6y}
\mathcal F_{N,\beta}(s,h) = {1 \over C_{N,\beta}} {1 \over 2^{2h} }\int_0^{2 \pi} d \theta_1 \cdots  \int_0^{2 \pi} d \theta_N \,
\prod_{l=1}^N |1 - e^{i \theta_l} |^{2s} \Big | \sum_{j=1}^N \cot {\theta_j \over 2} \Big |^{2h} |\Delta_N( \mathbf e^{\mathbf i \mathbold \theta} ) |^\beta.
\end{equation}

The second and final step is to use the familiar fact in random matrix theory that under the mapping $x_j = \cot (\theta_j/2)$ ($j=1,\dots,N)$,
which geometrically can be viewed as a stereographic projection from the unit circle to the real line, the integrand in
(\ref{1.6y}) with $h=0$ transforms according to (\ref{1.6z}).
The stated identity (\ref{1.6x}) follows.
\end{proof} 

\begin{remark}
1.~The generalisation of the circular $\beta$-ensemble on the LHS of (\ref{1.6z}) is given the name
circular Jacobi $\beta$-ensemble in random matrix theory; see \cite[\S 3.9]{Fo10}. For a recent study
see \cite{FLT20}.   \\
2.~The Cauchy ensemble average of $\sum_{j=1}^N  |x_j|^{2h}$, rather than $ | x_1 + \cdots + x_N|^{2h}$ as in
(\ref{1.6x}), has been the subject of the recent works \cite{ABGS20} (for $\beta = 2$) and
\cite{FR20} (for $\beta = 1,2$ and 4).  
\end{remark}

\subsection{Generalising (\ref{1.4b})}\label{S2.2}
A generalisation of the PDF corresponding to the RHS of (\ref{1.6z}),
\begin{equation}\label{1.7}
{1 \over \mathcal N_N^{(Cy)}} \prod_{j=1}^N {1 \over (1 + i x_j)^{\beta (N - 1)/2 +1 + s_+} (1 - i x_j)^{\beta (N - 1)/2 +1 + s_-}}   |\Delta_N( \mathbf x ) |^\beta ,
\end{equation}
where $ \mathcal N_N^{(Cy)}=  \mathcal N_N^{(Cy)}(s_+, s_-, \beta)$ is the normalisation, which is well defined for Re$\,(s_+ + s_-) > -1$.
We will refer to this as the nonsymmetric Cauchy
$\beta$-ensemble, which with
$s_+ = s_- = s$ specialises to the  symmetric Cauchy
$\beta$-ensemble corresponding to the RHS of (\ref{1.6z}).
Note that only in the case $s_-$ is equal to complex conjugate of $s_+$ can this be interpreted as a PDF, as otherwise it
may not be real and non-negative. However this does not impact on our working.
 We remark too that in the symmetric case the normalisation is evaluated in
terms of gamma functions \cite[\S 4.7.1]{Fo10} according to
\begin{equation}\label{1.7a}
 \mathcal N_N^{(Cy)} = 2^{-\beta N (N - 1)/2 - 2N s} \pi^N  \,M_N(s,s,\beta/2),
\end{equation}
where
\begin{equation}\label{2.2b}
M_N(a,b,\lambda) = \prod_{j=0}^{N - 1} { \Gamma (\lambda j+a+b+1)
\Gamma(\lambda (j+1)+1) \over
  \Gamma (\lambda j+a+1)\Gamma (\lambda j+b+1) \Gamma (1 + \lambda)}.
\end{equation}  

For the symmetric Cauchy $\beta$-ensemble consider the trace statistic (\ref{TT.1})
and let $P_{N,s}^{(Cy)}(T)$ denote its PDF. From the definitions
\begin{multline}\label{1.7b}
m_{N,s}^{(Cy)}(h) := \int_{-\infty}^\infty |T|^{2h} P_{N,s}^{(Cy)}(T) \, dT   \\  =
{1 \over \mathcal N_N^{(Cy)}}\int_{-\infty}^\infty dx_1 \cdots  \int_{-\infty}^\infty dx_N \,
 | x_1 + \cdots + x_N|^{2h} 
  \prod_{j=1}^N {1 \over (1 + x_j^2)^{\beta (N - 1)/2 +1 + s}}   |\Delta_N( \mathbf x ) |^\beta,
\end{multline}  
which we see from the behaviour of the integrand for $|x_j| \to \infty$ as decaying at a rate
proportional to $|x_j|^{-2(1+s)}$, is well
defined for the parameter range (\ref{R})
in keeping with the condition associated with (\ref{1.1}).
Thus, recalling too  (\ref{1.6x}),
 knowledge of $P_{N,s}^{(Cy)}(T)$ is sufficient to compute  $\mathcal F_{N,\beta}(s,h)$,
 where the average is with respect to the PDF (\ref{1.7}) with $s_+ = s_- = s$.

 More tractable than studying $P_{N,s}(T) $ directly is to study its 
 characteristic function (Fourier transform), $\hat{P}_{N,s}(t)$ say, specified by
\begin{equation}\label{1.7c} 
\hat{P}_{N,s}^{(Cy)}(t) =  \Big \langle  \prod_{j=1}^N e^{i t x_j} \Big \rangle_s^{(Cy)},
\end{equation} 
where the average is with respect to the PDF (\ref{1.7}) with $s_+ = s_- = s$.
In fact it is advantageous to generalise (\ref{1.7c}) so that the average is  with respect to
the nonsymmetric Cauchy weight (\ref{1.7}) with $s_+, s_-$ independent 
and thus to consider
\begin{equation}\label{2.9b} 
\hat{P}_{N,s_+, s_-}^{(Cy)}(t) =  \Big \langle  \prod_{j=1}^N e^{i t x_j} \Big \rangle_{s_+, s_-}^{(Cy)}.
\end{equation} 
Our immediate aim is to extend to the general $\beta > 0$ case the $\beta = 2$ transformation
formula (\ref{1.4b}) due to Winn \cite{Wi12}, involving
hypergeometric
functions based on Jack polynomials; for an account of the latter see \cite[Ch.~12\&13]{Fo10}.

Such hypergeometric functions are functions of several variables, $\{x_j \}_{j=1}^m$ say, as well
as a parameter $\alpha$.
They can be defined by the series
 \begin{equation}\label{3.40}
 _p F_q^{(\alpha)}(a_1,\dots,a_p;b_1,\dots,b_q;x_1,\dots,x_m):=\sum_\kappa \frac{1}{|\kappa|!}\frac{[a_1]^{(\alpha)}_\kappa\dots [a_p]^{(\alpha)}_\kappa }{[b_1]^{(\alpha)}_\kappa
\dots [b_q]^{(\alpha)}_\kappa} 
C_\kappa^{(\alpha)}(x_1,\dots,x_m).
\end{equation}
In this expression the sum is over all partitions $\kappa_1 \ge \kappa_2 \ge \cdots \ge \kappa_m \ge 0$. Conventionally this sum is
performed in order of increasing 
$|\kappa| : = \sum_{j=1}^m \kappa_j$. 
The symbol $[u]_\kappa^{(\alpha)}$ denotes the generalised Pochhammer symbol, and is defined by
 \begin{equation}\label{C0}
 [a]_\kappa^{(\alpha)} = \prod_{j=1}^m \Big ( a - {1 \over \alpha} (j-1) \Big )_{\kappa_j}, \qquad (a)_k = a (a + 1) \cdots (a+k - 1) = {\Gamma(a+k) \over \Gamma(a)}.
 \end{equation}
 The functions $\{ C_\kappa^{(\alpha)}(x_1,\dots,x_m) \}$ are polynomials 
 proportional to the Jack symmetric polynomial (see e.g.~\cite[\S 12.6]{Fo10}). As such they
are homogeneous symmetric polynomials of degree $|\kappa|$. For $m=1$, $C_\kappa^{(\alpha)}(x) = x^{\kappa_1}$,
$ [a]_\kappa^{(\alpha)} = (a)_k$ and (\ref{3.40})
reduces to the classical definition of ${}_qF_p$ in one variable.  As with ${}_qF_p$ in one variable, ${}_qF_p^{(\alpha)}$ exhibits the confluence
property
 \begin{equation}\label{C1}
 \lim_{a_p \to \infty} {}_p F_q^{(\alpha)}(a_1,\dots,a_p;b_1,\dots,b_q; 
 { x_1 \over a_p},\dots, {x_m \over a_p})= {}_{p-1} F_q^{(\alpha)}(a_1,\dots,a_{p-1};b_1,\dots,b_q;x_1,\dots,x_m). 
\end{equation}

It is well known that hypergeometric
functions based on Jack polynomials are naturally related to the Jacobi $\beta$-ensemble from random matrix theory;
see e.g.~\cite[Ch.~13]{Fo10}. The Jacobi $\beta$-ensemble conventionally refers to the eigenvalue PDF supported on
$x_l \in (0,1)$ $(l=1,\dots,N)$ proportional to
 \begin{equation}\label{J1}
 \prod_{l=1}^N x_l^{\lambda_1} (1 - x_l)^{\lambda_2} | \Delta_N( \mathbf x ) |^\beta ,
\end{equation}
where it is required that $\lambda_1, \lambda_2 > -1$ for this to be normalisable.
In the case $N = 1$ (\ref{J1})  reduces to the PDF for the beta distribution B$[\lambda_1+1, \lambda_2+1]$ in probability theory and statistics,
specified by the functional form $x^{\lambda_1} (1 - x)^{\lambda_2}/ B(\lambda_1+1, \lambda_2+1)$, where 
 \begin{equation}\label{J1x}
B(a, b) = {\Gamma(a)  \Gamma(b)   \over   \Gamma(a + b)}
\end{equation}
refers to the Euler beta function in usual notation. It is a classical result that this single variable PDF relates
to the Gauss hypergeometric function --- that is the case $p=2, q=1, m=1$ of (\ref{3.40}) --- via the integral representation
 \begin{equation}\label{J2}
 {1 \over B(\lambda_1+1, \lambda_2+1) }  \int_0^1 x^{\lambda_1} (1 - x)^{\lambda_2} (1 - t x)^{-r} \, dx =
 {}_2^{} F_1 (r, \lambda_1 + 1; \lambda_1 + \lambda_2 + 2; t).
 \end{equation}
Similarly in the multivariable case, the way (\ref{J1}) relates to the hypergeometric functions based on Jack polynomials
is through particular integral representations.

To make use of these known results for the  Jacobi $\beta$-ensemble in the problem of analysing the Cauchy $\beta$-ensemble
quantity (\ref{1.7c}), we require theory from the recent work \cite{FR20} which relates averages in the two ensembles. To present this
theory, with the PDF for the nonsymmetric Cauchy $\beta$-ensemble specified by (\ref{1.7}), denote the ensemble average of $f = f(x_1,\dots,x_N)$ by
$\langle f  \rangle^{(Cy)}_{s_+, s_-}$. Similarly, with the PDF for the Jacobi $\beta$-ensemble specified by the normalised version of
(\ref{J1}), denote the corresponding average of $f$ by $\langle f \rangle^{(J)}_{\lambda_1,\lambda_2}$. For both classes of
averages the dependence on $N$ and $\beta$ is implicit. With this notation, we have from \cite[Prop.~2.1]{FR20} that for
$f$ a multivariable polynomial
 \begin{equation}\label{J3}
  \langle f ( 1 + i x_1,\dots,  1 + i x_N)  \rangle^{(Cy)}_{s_+, s_-} =
  \Big \langle f  \Big (2 y_1,\dots, 2 y_N  \Big ) \Big \rangle^{(J)} \Big |_{\lambda_1 = - \beta (N - 1)/2 - 1 - s_+ \atop  \lambda_2  = - \beta (N - 1)/2 - 1 - s_-} .
 \end{equation}
 An essential point regarding the interpretation of this identity is that both sides are to be understood in the sense of analytic
 continuation in  $s_+, s_-$.
 
 Before considering the general $N$ case, it is instructive to first specialise to $N = 1$. 
 With $f(x) = x^p$, $p \in \mathbb Z_{\ge 0}$, (\ref{J3}) then reads
  \begin{equation}\label{2.19}
  \langle (1 + i x )^p \rangle_{s_+, s_-}^{(Cy)} =  \langle  (2 x)^p \rangle^{(J)}   \Big |_{\lambda_1 =  - 1 - s_+ \atop  \lambda_2  =  - 1 - s_-} .
  \end{equation}
  Note that the LHS is well defined for
    \begin{equation}\label{2.19a}
    0 \le p \le {\rm Re} \, (s_+ + s_-) - 1.
   \end{equation}
    We want to use knowledge of the functional form of the RHS of (\ref{2.19}) to compute the LHS
    in the range (\ref{2.19a}).
    By definition, the exponential generating function of the  moments specifying the RHS of (\ref{2.19}) is given by
   \begin{equation}\label{2.20}
   \langle e^{2t x} \rangle^{(J)} \Big |_{\lambda_1 =  - 1 - s_+ \atop  \lambda_2  =  - 1 - s_-}  =
   {1 \over B(-s_+, - s_-)} \int_0^1 x^{-s_+-1} (1 - x)^{-s_- - 1} e^{2 t x} \, dx.
     \end{equation}
     This is well defined as presented for Re$\,(s_+)$, Re$\,(s_-) < 0$, and defined more generally
     by analytic continuation. The integral on the RHS of (\ref{2.20}) is a confluent limit of (\ref{J2})  (replace $t$ by $-2t/r$ and take
     $r \to \infty$) telling us that
     \begin{equation}\label{2.21}  
     {1 \over B(-s_+,-s_-)} \int_0^1 x^{-s_+-1}(1 - x)^{-s_- - 1} e^{2tx} \, dx =
     {}_1 F_1 (-s_+, - s_+-s_-; 2t).
   \end{equation}
   Equating like powers of $t$, making use of (\ref{3.40}) with $p=q=m=1$ on the RHS
   we can read off the moments, which are given in terms of particular
   Pochhammer symbols as defined in (\ref{C0}).
   Use now of (\ref{2.19}) gives the sought evaluation
    \begin{equation}\label{2.21a}  
   \langle (1 + i x )^p \rangle_{s_+, s_-}^{(Cy)}  =  2^p { (-s_+)_p \over (-s_+ - s_-)_p}.
     \end{equation}
     Note from the definition of the Pochhammer symbols 
     that the RHS is indeed well defined in the parameter range specified by (\ref{2.19a}).
     
     The utility of (\ref{2.21a}) in calculating the $N = 1$ cases of (\ref{1.7c}) and (\ref{2.9b}) shows itself if we restrict to $s \in \mathbb Z_{\ge 0}$ and
     $s_+ \in \mathbb Z_{\ge 0}$ respectively. For definiteness, consider (\ref{2.9b}) with $s_+ \in \mathbb Z_{\ge 0}$, and require too that $t > 0$.
     The significance of restricting to $s_+ \in \mathbb Z_{\ge 0}$ is that then the integrand is analytic in the upper half plane, apart from a pole of order
     $1 + s_+$ at $z = i$. And with $t > 0$ the integrand decays exponentially fast along the arc $R e^{i \theta}$, $0 < \theta < \pi$, $R \to \infty$.
     Closing the contour in the upper half plane, the residue calculus implies
   \begin{equation}\label{2.24}
   \langle e^{t(1 + i x)} \rangle_{s_+, s_-}^{(Cy)} = q_{s_+}(t;s_-),
   \end{equation}      
   for some polynomial $q_{s_+}$ of degree $s_+$ in $t$. We know the moments obtained by 
   expanding the LHS in $t$ are all well defined in the parameter range (\ref{2.19a}). Requiring then that
   $s_+ < {\rm Re} \, (s_+ + s_-)$, the polynomial $q_{s_+}$ is fully determined by the moments.
   Reading from (\ref{2.21a}) we conclude
   \begin{equation}\label{2.25} 
   q_{s_+}(t;s_-) = \sum_{p=0}^{s_+} {(-s_+)_p \over p! ( - s_+ - s_-)_p} (2t)^p.
    \end{equation}  
    
    We seek to extend (\ref{2.24}) and (\ref{2.25}) for $N > 1$ and general $\beta > 0$.  The
    extension of (\ref{2.24}) will be deduced first.
    
    \begin{proposition}\label{P2.2}
Suppose $t > 0$, $s_+ \in \mathbb Z_{\ge 0}$,
Re$\, (s_+ + s_-) \ge 0$
 and that $\beta$ is even. We have
\begin{equation}\label{2.30a}
  \langle  e^{ i t (x_1 + \cdots + x_N) }   \rangle^{(Cy)}_{s_+, s_-} = e^{-Nt}  Q_{s_+ N}(t;s_-),
  \end{equation}  
where $ Q_{s_+ N}(t;s_-)$ is a polynomial in $t$ of degree $s_+N$.
Moreover, with
\begin{equation}\label{2.30b}
{\rm Re} \, (s_-) \ge (N - 1) s_+
 \end{equation} 
 we have
\begin{equation}\label{2.30c} 
Q_{s_+ N}(t;s_-) = \sum_{p=0}^{s_+ N} { t^p \over p!} \Big \langle  \Big ( i \sum_{i=1}^N x_i + N \Big )^p
\Big \rangle_{s_+, s_-}^{(Cy)}.
 \end{equation}  
\end{proposition}

\begin{proof}
In the case $\beta$ even we can expand $  |\Delta_N( \mathbf x ) |^\beta$ as a multivariable
polynomial in $\{x_j\}_{j=1}^N$, homogeneous of degree $\beta N (N - 1)/2$. 
The integrand
is an analytic function in each $x_j$ in the upper half plane, except for a pole of
degree $\alpha_+$, where $\alpha_\pm := \beta (N - 1)/2 + 1 + s_\pm$, at $x_j = i$ for each $j$. Hence, by the residue theorem, the
integral defining the average is proportional to
\begin{equation}\label{r1}
e^{- N t} {\partial^{\alpha_+ - 1} \over \partial x_1^{\alpha_+ - 1} } \cdots  {\partial^{\alpha_+ - 1} \over \partial x_N^{\alpha_+ - 1} } 
\prod_{l=1}^N {e^{i t ( x_l - i)} \over (1 - i x_l)^{\alpha_-} }   (\Delta_N( \mathbf x ) )^\beta  \Big |_{x_1 = \cdots = x_N = i}.
 \end{equation}
We see immediately the structure (\ref{2.30a}), although it remains to determine the
degree of the polynomial.

From the meaning of the factor $(\Delta_N( \mathbf x ))^\beta$, we see that (\ref{r1}) vanishes unless exactly $\beta N (N - 1)/2$ derivative
operations are applied to this factor. As a consequence, a maximum
of $s_+ N $ derivative operations can act of the factor $\prod_{l=1}^N e^{i t ( x_l - i)}$, telling us that the highest
power of $t$ in $Q_{sN}(t)$ is $s_+N$, which is the required result.

To derive (\ref{2.30c}), note that the assumption (\ref{2.30b}) implies each of the averages in (\ref{2.30c}) are well
defined, thus implying that it is a consequence of (\ref{2.30a}), multiplied through by $e^{Nt}$.
\end{proof}

\begin{remark}
The analytic structure of an exponential times a polynomial has been present, and
played an important role, in a number of
recent studies of distributions in random matrix theory \cite{Ku19,FK19,FT19}.
\end{remark}

The task now is to evaluate the averages in (\ref{2.30c}) which determine the polynomial $Q_{s_+ N}(t;s_-) $.

   \begin{proposition}\label{P2.2+}
   With $s_+ \in \mathbb Z_{\ge 0}$ and the condition (\ref{2.30b})   we have
  \begin{equation}\label{2.30d}  
  Q_{s_+ N} (t; s_-) = {}_1^{} F_1^{(2/\beta)} \Big ( -s_+, - (s_+ + s_-); (2t)^N \Big ),
  \end{equation}
  where the notation $(u)^N$ denotes the argument 
   $u$ repeated $N$ times, and ${}_1 F_1^{(2/\beta)}$ denotes the series implied by (\ref{3.40}).
   Moreover, the condition (\ref{2.30b})  can be weakened to 
    \begin{equation}\label{2.30e}  
  {\rm Re} \, (s_+ + s_-) \ge 0, 
 \end{equation}    
 provided we interpret ${}_1 F_1^{(2/\beta)} $ in (\ref{2.30d}) as terminating when the  largest part of the partition $\kappa$ greater than $s_+$.
     
  \end{proposition}

\begin{proof}
A fundamental result in Jack polynomial theory is an integration formula
 associated with Macdonald--Kadell--Kaneko (see \cite{FW08}),
  \begin{eqnarray}\label{15.pse1}
&&{1 \over S_N(\lambda_1,\lambda_2;1/\alpha)} 
\int^1_0 dt_1 \cdots \int^1_0 dt_N \,
\prod_{l=1}^N t_l^{\lambda_1} (1-t_l)^{\lambda_2}
C_{\kappa}^{(\alpha)} (t_1,\ldots,t_N) \prod_{j<k}\left|t_j-t_k\right|^{2/\alpha}
\nonumber \\
&& \qquad \qquad =  C_{\kappa}^{(\alpha)} ((1)^N) \frac{[\lambda_1 +(N-1)/\alpha +1]^{(\alpha)}_{\kappa} }
{[\lambda_1 + \lambda_2 +2(N-1)/\alpha +2]^{(\alpha)}_{\kappa}}.
\end{eqnarray}
Here the normalisation $S_N$ is the Selberg integral, which as already remarked has
the gamma function evaluation (\ref{3.2g}). The polynomials $\{ C_{\kappa}^{(\alpha)} \}$ are the Jack
polynomials in the same normalisation as in (\ref{3.40}). The value of $ C_{\kappa}^{(\alpha)} ((1)^N) $ is
specified in Remark \ref{R3.4+} below.

To be able to make use of (\ref{15.pse1}), it needs to be possible to expand the polynomial 
$\Big ( i \sum_{i=1}^N x_i + N \Big )^p$ in (\ref{2.30c})
in terms of $\{ C_{\kappa}^{(\alpha)} \}$. In fact we have the simple formula (see \cite[eq.~(12.135)]{Fo10})
 \begin{equation}\label{J3Z}
 (x_1 + \cdots + x_N)^p = \sum_{|\kappa| = p} C_\kappa^{(\alpha)}(x_1,\dots,x_N).
\end{equation}  
It follows 
 \begin{equation}\label{H3Z}
 \Big \langle  (x_1 + \cdots + x_N)^p  \Big \rangle^{(J)}_{\lambda_1,\lambda_2}
 \bigg |_{\lambda_1 =    - \beta (N - 1)/2 - 1 - s_+ \atop  \lambda_2 = - \beta (N - 1)/2 - 1 - s_-}  
    =  \sum_{|\kappa| = p}   C_{\kappa}^{(2/\beta)} ((1)^N) \frac{[- s_+ ]^{(2/\beta)}_{\kappa} }
{[-  s_+ - s_- ]^{(2/\beta)}_{\kappa}},
\end{equation}  
where the RHS is be understood as providing an analytic continuation in $s_+, s_-$ of the LHS. Upon application of
 (\ref{J3}) we obtain for the moments in (\ref{2.30c})
  \begin{equation}\label{H3Z1}
  \Big \langle  \Big ( i \sum_{i=1}^N x_i + N \Big )^p
\Big \rangle_{s_+, s_-}^{(Cy)} =  2^{|\kappa|} \sum_{|\kappa| = p} 
C_{\kappa}^{(2/\beta)} ((1)^N) \frac{[- s_+ ]^{(2/\beta)}_{\kappa} }
{[-  s_+ - s_- ]^{(2/\beta)}_{\kappa}}.
\end{equation}  
Note that the denominator in (\ref{H3Z1}) is well defined by the assumption (\ref{2.30b})
for all partitions with $\kappa_1 \le s_+$ since $p \le s_+N$.
Substituting in (\ref{2.30c}) and comparing with the series definition of 
${}_1 F_1^{(2/\beta)}$ as implied by (\ref{3.40}) gives (\ref{2.30d}).

With the condition (\ref{2.30b})
    \begin{equation}\label{2.30f}  
 {}_1^{} F_1^{(2/\beta)} \Big ( -s_+, - (s_+ + s_-); (2t)^N \Big ) =     \sum_{p=0}^{s_+ N} 
\sum_{|\kappa| = p \atop \kappa_1 \le s_+} 
{C_{\kappa}^{(2/\beta)} ((2t)^N) \over | \kappa |! }\frac{[- s_+ ]^{(2/\beta)}_{\kappa} }
{[-  s_+ - s_- ]^{(2/\beta)}_{\kappa}}.
\end{equation}
Substituting (\ref{2.30d}) in  (\ref{2.30a}), which we know is valid for $s_+ \in \mathbb Z_{\ge 0}$,
Re$\, (s_+ + s_-) \ge 0$, we see that (\ref{2.30f}) remains valid while it is an analytic
function of $s_-$, which is for the range (\ref{2.30e}).
  \end{proof}
  
  \begin{remark}\label{R3.4+}
  The value of $C_\kappa^{(\beta/2)}  ( (1)^s )$ is known explicitly from Jack polynomial
  theory (see e.g.~\cite[Ch.~12]{Fo10}). It involves the diagram of the partition $\kappa$;
  see e.g.~\cite[Def.~10.1.3]{Fo10}.
  On this, recall the diagram records each nonzero part $\kappa_i$ as row
   $i$ of $\kappa_i$ boxes, drawn flush left starting from the first column. Consider a square $(i,j)$ in
   the diagram. The arm length $a(i,j)$ is defined as the number of boxes in the row to the
   right and the co-arm length $a'(i,j)$ is the number in the row to the left. The leg length $\ell(i,j)$ 
   is the number of boxes in the column below and the co-leg length $\ell'(i,j)$  is the number in the column
   and above the square.
  In terms of this notation we have
   \begin{equation}\label{2.6+} 
 C_\kappa^{(\alpha)}((1)^n) =  \alpha^{|\kappa|} |\kappa|!   {b_\kappa \over d_\kappa' h_\kappa},
  \end{equation}  
  where $b_\kappa, d_\kappa' , h_\kappa$ are specified in terms of the diagram of $\kappa$ according to
   \begin{equation}\label{2.6a+} 
 b_\kappa = \prod_{(i,j) \in \kappa} \Big ( \alpha a'(i,j) + n -   \ell'(i,j) \Big ) = \alpha^{|\kappa|} [n/\alpha]_\kappa^{(\alpha)}, \quad
d_\kappa'  =   \prod_{(i,j) \in \kappa} \Big ( \alpha (a(i,j) + 1) +   \ell(i,j) \Big )
  \end{equation}  
  and
  \begin{equation}\label{2.6b+} 
 h_\kappa  =   \prod_{(i,j) \in \kappa} \Big ( \alpha a(i,j)  +   \ell(i,j) + 1\Big ).    
  \end{equation}  
  The simplest case is when $n=1$, and we have
    \begin{equation}\label{2.6c+} 
 C_\kappa^{(\alpha)}((1)^n) \Big |_{n=1} =  1
  \end{equation}
  for all $\kappa = (\kappa_1,0,\dots,0)$ in keeping with $ C_\kappa^{(\alpha)}(x) = x^{\kappa_1}$,
  as already noted below (\ref{C0}). 
  \end{remark}
  
  \begin{corollary}\label{C1}
  Suppose $t > 0$, $s \in \mathbb Z_{\ge 0}$.
 For all $\beta > 0$ we have
\begin{align}\label{2.30a5}
  \langle  e^{ i t (x_1 + \cdots + x_N) }   \rangle^{(Cy)}_{s} & = e^{-Nt} \,
 \lim_{\epsilon \to 0}  {}_1^{} F_1^{(2/\beta)} \Big ( -s, - 2 s - \epsilon; (2t)^N \Big ) \nonumber \\
 & =  e^{-Nt} \sum_{p=0}^{s N} 
\sum_{|\kappa| = p \atop \kappa_1 \le s} 
{(2t)^{|\kappa|} C_{\kappa}^{(2/\beta)} ((1)^N) \over | \kappa |! }\frac{[- s ]^{(2/\beta)}_{\kappa} }
{[-  2 s ]^{(2/\beta)}_{\kappa}}.
\end{align} 
Moreover, replacing $t$ by $|t|$ on the RHS allows the condition $t > 0$ to be replaced
by $t \in \mathbb R$.
\end{corollary}

\begin{proof}
Substituting (\ref{2.30d}) in (\ref{2.30a}) and recalling (\ref{2.30f}) gives (\ref{2.30a5}) valid for $\beta$ even.
To remove the latter restriction, note from (\ref{3.40})), (\ref{C0})
and (\ref{2.6+}),
that each term on the RHS of (\ref{2.30a5})
 is a bounded analytic function of $\beta$  for Re$ \, (\beta) > 0$.
 On the LHS the function being averaged --- a complex exponential ---
 has modulus equal to 1, and so it is immediate that
this side is similarly a bounded  analytic function of $\beta$  for Re$ \, (\beta) > 0$. Application of Carlson's
theorem (see e.g.~\cite[Prop. 4.1.4]{Fo10}), as is familiar in the theory of the Selberg integral \cite{Se44,Dy62},
then allows the validity of (\ref{2.30a5}) for general $\beta > 0$  to be concluded from knowledge of
its validity for positive even $\beta$.

The fact that the LHS in an even function of $t$ from its definition (thus the PDF proportional to the
RHS of (\ref{1.6z}) for the
symmetric Cauchy $\beta$-ensemble is unchanged by the mapping  $x_j \mapsto - x_j$ for each $j=1,\dots,N$),
and that both sides equal unity for $t=0$ show that replacing $t$ by 
$|t|$ in the RHS of (\ref{2.30a5}) gives an equation valid for all $t \in \mathbb R$.
\end{proof}

    \subsection{Generalising (\ref{1.6})}\label{S2.4}
The generalised hypergeometric function on the RHS of (\ref{2.30a5}),
which is based on Jack polynomials in $N$ variables,
permits a transformation
to involve a hypergeometric function based on
Jack polynomials in $s$. The advantage of this latter form is that is well
suited to taking the scaled large $N$ limit.

\begin{proposition}\label{P2.7}
Let $s  \in \mathbb Z_{\ge 0}$. We have
  \begin{equation}\label{L1z}
 \lim_{\epsilon \to 0}  {}_1^{} F_1^{(2/\beta)} \Big ( -s, - 2 s - \epsilon; (2t)^N \Big )  =
  {}_1^{} F_1^{(\beta/ 2)} ( - N; 4 s /\beta ; (-4t/\beta)^s ).
 \end{equation}
 Consequently
   \begin{equation}\label{L1zg}
  \hat{P}_{N,s}^{(Cy)}(t)  = e^{-N |t|} \,     {}_1^{} F_1^{(\beta/ 2)} ( - N;  4 s  /\beta ; (-4|t|/\beta)^{s} )
   \end{equation}
   and
   \begin{equation}\label{L1z+}
 \hat{P}_{\infty,s}^{(Cy)}(t) := \lim_{N \to \infty}   \hat{P}_{N,s}^{(Cy)}(t/N)=  e^{-|t|} \,  {}_0^{} F_1^{(\beta/ 2)} ( \underline{\: \: \: } \, ;  4 s /\beta ; (4|t|/\beta)^s ).
 \end{equation} 
 \end{proposition}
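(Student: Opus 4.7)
The strategy is to reduce \eqref{L1z} to a Jack hypergeometric duality exchanging $(s, 2/\beta) \leftrightarrow (N, \beta/2)$, and then derive \eqref{L1zg} and \eqref{L1z+} as immediate corollaries.

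\emph{Step 1: Reduction.} Writing the Laguerre average on the left-hand side of \eqref{L1z} against the PDF \eqref{L1k} with $a=s$ gives
\begin{equation*}
\frac{W_{N,s,\beta}}{W_{N,2s,\beta}}\Big\langle \prod_{l=1}^N(2t+x_l)^s \Big\rangle^{(L)}_s = \frac{1}{W_{N,2s,\beta}}\int_0^\infty\!\!\!\cdots\!\int_0^\infty \prod_{j=1}^N x_j^s(2t+x_j)^s e^{-x_j}|\Delta(\mathbf{x})|^\beta\,d\mathbf x,
\end{equation*}
and Proposition \ref{p24} with $t\mapsto 2t$ identifies this with ${}_1^{}F_1^{(2/\beta)}(-s;-2s;(2t)^N)$. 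Thus \eqref{L1z} is equivalent to the duality
\begin{equation*}
{}_1^{}F_1^{(2/\beta)}(-s;-2s;(2t)^N) = {}_1^{}F_1^{(\beta/2)}(-N;4s/\beta;(-4t/\beta)^s),
\end{equation*}
swapping the shape parameter $s \leftrightarrow N$ and $\alpha = 2/\beta \leftrightarrow 1/\alpha = \beta/2$.

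\emph{Step 2: Proving the duality.} Both sides are polynomials in $t$ of degree at most $sN$, terminating because $(-s)_n$ on one side and $[-N]_\mu^{(\beta/2)}$ on the other vanish beyond the permitted range. I would expand $\prod_{l=1}^N(2t+x_l)^s$ via the dual Cauchy identity for Jack polynomials (with the $s$-fold argument $y_i \equiv 1/(2t)$) into a finite sum over partitions $\kappa$ with $\kappa_1 \le s$ of terms proportional to $C_\kappa^{(2/\beta)}(\mathbf x)\cdot C_{\kappa'}^{(\beta/2)}((1)^s)$, then average term-by-term against the Laguerre PDF at $a=s$ using the Jack-polynomial moment
\begin{equation*}
\Big\langle C_\kappa^{(2/\beta)}(\mathbf x)\Big\rangle^{(L,s,\beta)} = C_\kappa^{(2/\beta)}((1)^N)\,[s+1+(N-1)\beta/2]_\kappa^{(2/\beta)},
\end{equation*}
obtained from the $\lambda_2 \to \infty$ limit of \eqref{15.pse1}. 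Re-indexing by the conjugate $\mu = \kappa'$ and collapsing the resulting ratios of generalised Pochhammer symbols and principal specialisations under the standard transposition rules (which follow from the hook-length product formula for Jack polynomials and the product definition \eqref{C0}) reassembles the series as the right-hand side of the duality. This combinatorial matching under $\alpha \leftrightarrow 1/\alpha$ and $\kappa \leftrightarrow \kappa'$ is the main technical obstacle: a standard but somewhat bookkeeping-heavy Jack-polynomial exercise.

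\emph{Step 3: Consequences.} Substituting \eqref{L1z} into Corollary \eqref{L1e} and recognising $\langle e^{it\sum x_j}\rangle^{(Cy)}_s$ as $\hat P_{N,s}^{(Cy)}(t)$ from \eqref{1.7c} yields \eqref{L1zg}, with the factor $e^{-N|t|}$ inherited directly from \eqref{L1e}. For \eqref{L1z+}, rescale $t \mapsto t/N$ in \eqref{L1zg} and write the hypergeometric argument as $-4(t/N)/\beta = (4t/\beta)/(-N)$; the confluence property \eqref{C1} with $a_p = -N \to -\infty$ then collapses ${}_1^{}F_1^{(\beta/2)}(-N;4s/\beta;\cdot)$ to ${}_0^{}F_1^{(\beta/2)}(\,\underline{\;\;\;}\,;4s/\beta;(4|t|/\beta)^s)$, while $e^{-N|t/N|} = e^{-|t|}$ persists unchanged, giving the stated limit.
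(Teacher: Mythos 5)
Your proposal is correct and, for the second and third assertions, coincides with the paper's argument: \eqref{L1zg} follows by inserting \eqref{L1z} into \eqref{L1e} and recalling \eqref{1.7c}, and \eqref{L1z+} by the confluence property \eqref{C1} applied with the large parameter $-N$. The difference lies entirely in the treatment of \eqref{L1z} itself. The paper simply cites this identity from \cite{FH94} (see also \cite[Prop.~13.2.5]{Fo10}), whereas you first observe --- correctly, via Proposition \ref{p24} with $t\mapsto 2t$ --- that \eqref{L1z} is equivalent to the duality
\begin{equation*}
{}_1^{}F_1^{(2/\beta)}(-s;-2s;(2t)^N)={}_1^{}F_1^{(\beta/2)}(-N;4s/\beta;(-4t/\beta)^s)
\end{equation*}
exchanging $(N,2/\beta)\leftrightarrow(s,\beta/2)$, and then sketch a proof via the dual Cauchy identity, the Laguerre ($\lambda_2\to\infty$) limit of \eqref{15.pse1}, and the transposition rules for the generalised Pochhammer symbols and principal specialisations. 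All of these ingredients are the right ones, and the bookkeeping you defer (e.g.\ $[a]_\kappa^{(\alpha)}=(-\alpha)^{-|\kappa|}[-\alpha a]_{\kappa'}^{(1/\alpha)}$ together with the hook-product form of \eqref{2.6+}) does close up to give the stated right-hand side; but this is precisely the content of the cited references, so your Step 2 amounts to inlining their proof rather than finding a new route. What your presentation buys is an explicit statement of the $N\leftrightarrow s$, $\alpha\leftrightarrow 1/\alpha$ duality underlying the passage from Jack polynomials in $N$ variables to Jack polynomials in $s$ variables, which the paper leaves implicit in the citation; what it costs is that the argument is not self-contained until the transposition computation is actually carried out (or the reference supplied), since that computation is the only nontrivial content of \eqref{L1z}.
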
  
 
   \begin{proof}
For $s \in \mathbb Z_{\ge 0}$ and general $s_1 \notin
 \mathbb Z_{\ge 0}$, analogous to the second line in
(\ref{2.30a5}) we have 
\begin{equation}\label{S1}
 {}_1^{} F_1^{(2/\beta)} \Big ( -s, -  s_1; (2t)^N \Big )  =  \sum_{p=0}^{s N} 
\sum_{|\kappa| = p \atop \kappa_1 \le s} 
{(2t)^{|\kappa|} C_{\kappa}^{(2/\beta)} ((1)^N) \over | \kappa |! }\frac{[- s ]^{(2/\beta)}_{\kappa} }
{[-   s_1 ]^{(2/\beta)}_{\kappa}}.
\end{equation}
Hence in the diagram picture of a partition  the sum of $\kappa$
is over partitions with no more than $N$ rows and $s$ columns. Now replace each $\kappa$ by its
conjugate $\kappa'$, which by definition interchange rows and columns. The sum now is over partitions
with diagram consisting of no more that $s$ rows and $N$ columns. Moreover, we have the general
relations between quantities relevant to the summand on the LHS of (\ref{S1}),
 \begin{equation}\label{L1yA}
 [a]_\kappa^{(\alpha)} = (- \alpha)^{-|\kappa|} [-\alpha a]_{\kappa'}^{(1/\alpha)}
 \end{equation}
 and 
  \begin{equation}\label{L1yB+}
  d'_{\kappa'} = \alpha^{|\kappa|} h_\kappa \Big |_{\alpha \mapsto 1/\alpha}, \qquad
  h_{\kappa'} =  \alpha^{|\kappa|} d'_{\kappa}  \Big |_{\alpha \mapsto 1/\alpha},
 \end{equation}
 where the relevance of (\ref{L1yB+}) is seen from (\ref{2.6+}). Thus we obtain the
 transformation identity
   \begin{align}\label{L1yB}
  {}_1^{} F_1^{(2/\beta)} \Big ( -s, -  s_1; (2t)^N \Big ) & =   
 \sum_{p=0}^{s N} 
\sum_{|\kappa| = p \atop \kappa_1 \le N} 
{(-4t/\beta)^{|\kappa|} C_{\kappa}^{(\beta/2)} ((1)^s) \over | \kappa |! }\frac{[-N]^{(\beta/2)}_{\kappa}}
{[ 2   s_1/\beta ]^{(\beta/2)}_{\kappa}}  \nonumber \\
& =  {}_1^{} F_1^{(\beta/2)} \Big ( -N, 2 s_1/\beta; (-4t/\beta)^s \Big ),
\end{align} 
 from which (\ref{L1z}) follows.
 Substituting     (\ref{L1z})  in the first line of (\ref{2.30a5}) gives (\ref{L1zg}).
 Now replacing $t$ by $t/N$ and taking the limit making use of (\ref{C1}) on the RHS of (\ref{L1z}) gives (\ref{L1z+}).
   
\end{proof}

\begin{remark}
1.~Suppose $s=1$. The hypergeometric functions based on Jack polynomials in 
 Proposition   \ref{P2.7} all reduce to their classical counterparts. For example,
 from (\ref{L1zg})
  \begin{equation}\label{sPt}
    \hat{P}_{N,s}^{(Cy)}(t) \Big |_{s=1}   = e^{-N |t|} \,     {}_1^{} F_1( - N; 4  /\beta ; -4t/\beta ).
 \end{equation}     
   A check is to take the limit $\beta \to 0$, since it follows from the definitions that
 for general $s$ such that the LHS is well defined we must have
  \begin{equation}\label{tPt}   
    \hat{P}_{N,s}^{(Cy)}(t) \Big |_{\beta =0}  =  \Big (      \hat{P}_{N,s}^{(Cy)}(t) \Big |_{N=1}  \Big )^N.
 \end{equation} 
Since
 for $\beta \to 0$ the RHS of (\ref{sPt}) reduces to
 $$
  e^{-N |t|} \,     {}_1 F_0( - N; \underline{\: \: \:} \, ; -t ) = \Big ( {e^{- |t|} ( 1 + |t| )} \Big )^N,
  $$
  consistency with (\ref{tPt}) is indeed observed.   \\
  2.~In words, the definition (\ref{1.7c}) of $\hat{P}_{N,s}(t)$ is the characteristic function for the
distribution of a linear statistic $\sum_{j=1}^N x_j$ in the symmetric Cauchy $\beta$-ensemble, or equivalently from the
  working of Proposition \ref{p2.1}, of the linear statistic $\sum_{j=1}^N \cot \theta_j/2$ for the circular
  Jacobi $\beta$-ensemble. From the latter, the linear statistic is thus singular, as it diverges for 
  eigenvalues at $\theta = 0$. As such it exhibits anomalous large $N$ behaviour relative to the characteristic
  function for a smooth
  linear statistic $A = \sum_{j=1}^N a(x_j)$. Denoting the characteristic function by $\hat{P}_N(t;A)$,
  in the smooth case (see e.g.~\cite{PS11})
    \begin{equation}\label{2.52a}
  \lim_{N \to \infty}  \hat{P}_N^{(Cy)}(t/N;A) = e^{i t \mu_A}, \qquad \mu_A = \lim_{N \to \infty} {1 \over N} \langle A \rangle.
   \end{equation}     
  Hence the scaled moments beyond the first are all trivial, just being powers of the mean. For the present singular statistic, the
  scaled (even integer) moments are no longer simply related, as we see
 from (\ref{L1z+}). 

%
  
  \end{remark}
  
  \section{Proof of Theorems \ref{P1A} and \ref{T2}}\label{S3}
  \subsection{Proof of Theorem \ref{T2}}
  Writing the scaled PDF $ N {P}_{N,s}^{(Cy)} (N T)$ in (\ref{3.1}) as the
  inverse Fourier transform of $  \hat{P}_{N,s}^{(Cy)} (t/N)$ we see that
  \begin{equation}\label{3.1a} 
  {P}_{\infty,s}^{(Cy)} (T)  =  {1 \over 2 \pi}  \lim_{N \to \infty} 
  \int_{-\infty}^\infty   \hat{P}_{N,s}^{(Cy)} (t/N)   e^{-i t T} \, dt.
   \end{equation} 
   Application of Levy's continuity theorem permits the limit to be interchanged with
   the integral. Evaluating the limit inside the integral using (\ref{L1z+}) gives
    (\ref{3.2x}). Note that the existence of the limit in (\ref{3.1})
   for general $s \ge 0$ follows from results in \cite{AN19} on the generalised Hua-Pickrell
   measure; recall the discussion below Theorem \ref{T2}.

   \subsection{Evaluating the integral   (\ref{3.2x}) as a sum}
  Upon integrating (\ref{3.2x}) term-by-term particularly simple functional forms result in the cases $s=0$ and $s=1$.
  
  \begin{proposition}\label{P2.8}
  We have
    \begin{equation}\label{3.3a} 
  {P}_{\infty,s}^{(Cy)} (T) \Big |_{s=0} = {1 \over \pi} {1 \over 1 + T^2}
   \end{equation} 
   and
 \begin{equation}\label{3.3b} 
  {P}_{\infty,s}^{(Cy)} (T) \Big |_{s=1} = {1 \over \pi}  \, {\rm Re} \, {1 \over 1 + i T} \, {}_1 F_1\Big (1;4/\beta;  {4/\beta \over 1 + i T} \Big ).
   \end{equation}       
   \end{proposition}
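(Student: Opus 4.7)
The plan is to start from the integral representation (\ref{3.2}) and evaluate the $s=0$ and $s=1$ cases by exploiting the fact that in these instances the Jack-based hypergeometric function simplifies drastically, enabling termwise integration.

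For the case $s=0$, observe that in the series (\ref{3.40}) defining ${}_0F_1^{(\beta/2)}$, the argument $(4t/\beta)^s$ collapses to an empty set of variables, so only the empty partition contributes and the hypergeometric factor is identically $1$. Then (\ref{3.2}) reduces to $\frac{1}{\pi}\,\mathrm{Re}\int_0^\infty e^{-t(1+iT)}\,dt = \frac{1}{\pi}\,\mathrm{Re}\frac{1}{1+iT}$, which gives (\ref{3.3a}).

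For the case $s=1$, the key observation is that ${}_0F_1^{(\alpha)}$ in a single variable is independent of $\alpha$ and coincides with the classical ${}_0F_1$; this is precisely the reduction noted in Remark~1 following Proposition~\ref{P2.7}. Writing out the series
\[
{}_0F_1^{(\beta/2)}(\underline{\ \ \ };4/\beta;4t/\beta) = \sum_{n=0}^\infty \frac{1}{(4/\beta)_n\, n!}\,(4t/\beta)^n,
\]
I would substitute this into (\ref{3.2}) and interchange summation with integration (justified by uniform convergence of the truncated series against the exponentially decaying factor). Each term becomes an elementary gamma integral $\int_0^\infty e^{-t(1+iT)} t^n\, dt = n!/(1+iT)^{n+1}$, producing
\[
\frac{1}{1+iT}\sum_{n=0}^\infty \frac{1}{(4/\beta)_n}\left(\frac{4/\beta}{1+iT}\right)^n.
\]
Since $(1)_n = n!$, this sum equals ${}_1F_1(1;4/\beta;(4/\beta)/(1+iT))$, and taking $\frac{1}{\pi}\,\mathrm{Re}$ yields (\ref{3.3b}).

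There is no substantive obstacle: the argument is essentially a bookkeeping exercise. The only conceptual point worth flagging is the single-variable collapse of the Jack-based hypergeometric, which eliminates the $\beta$-dependence of the series coefficients and allows the closed-form evaluation — the same mechanism is invoked elsewhere in the paper (Remark after Proposition~\ref{P2.7}) and in (\ref{1.92a}) of Proposition~\ref{P1A}.
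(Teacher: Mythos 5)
Your proposal is correct and follows essentially the same route as the paper, which simply performs the term-by-term integration of (\ref{3.2}) after noting that the Jack-based ${}_0F_1^{(\beta/2)}$ trivialises for $s=0$ and collapses to the classical ${}_0F_1$ for $s=1$. The elementary gamma integral $\int_0^\infty e^{-t(1+iT)}t^n\,dt=n!/(1+iT)^{n+1}$ and the identification $(1)_n=n!$ reproduce (\ref{3.3a}) and (\ref{3.3b}) exactly as intended.
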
 
   
   \begin{proof}
   The case $s=0$ is immediate since the function $ {}_0^{} F_1^{(\beta/ 2)} $ in (\ref{3.2x}) is identically equal to one. The fact that all the
   terms in ${}_0^{} F_1^{(\beta/ 2)} $ are positive for $s \ge 1$, and that the integral is absolutely convergent (this is immediate since the
   large $t$ leading behaviour of ${}_0^{} F_1^{(\beta/ 2)} $ is of order $e^{4 s (t/\beta)^{1/2}}$, as follows from the integral form
   \cite[Eq.~(13.27)]{Fo10}; see in particular the discussion above   \cite[Eq.~(13.52)]{Fo10}) justifies the interchange of the sum and
   integral for $s \in \mathbb Z^+$, and in particular for $s=1$ when ${}_0^{} F_1^{(\beta/ 2)} $ reduces to its classical counterpart.
   \end{proof}
   
   \begin{remark}
   1.~The fact that the limiting distribution in the case $s=0$ is the Cauchy distribution for all $\beta > 0$ can already be
   seen from (\ref{2.30a5}). Thus with $s=0$ this result tells us that $ \hat{P}_{N,s}^{(Cy)} (t/N) = e^{-|t|}$, implying $\sum_{j=1}^N x_j/N$
   has a Cauchy distribution independent of both $\beta$ and $N$. \\
   2.~For the classical values $\beta = 1,2$ and 4 (at least), the RHS of (\ref{3.3b}) can be simplified.
   Thus we have
  \begin{align*}
   {P}_{\infty,s}^{(Cy)} (T) \Big |_{s=1,\beta = 1} & =   {-39 + 3 T^2 \over 32 \pi} + {3  \over  32 \pi } e^{4 \over 1 + T^2} \bigg (
   (1 - T^2) \cos \Big ( {4 T \over 1 + T^2} \Big ) + 2 T  \sin \Big ( {4 T \over 1 + T^2} \Big )   \bigg ) \\
     {P}_{\infty,s}^{(Cy)} (T) \Big |_{s=1,\beta = 2} & = {1 \over 2 \pi} \bigg ( -1 +  e^{2 \over 1 + T^2}    \cos \Big ( {2 T \over 1 + T^2} \Big )  \bigg ) \\
    {P}_{\infty,s}^{(Cy)} (T) \Big |_{s=1,\beta = 4} & =    {  e^{1 \over 1 + T^2} \over \pi (1 + T^2)}   \bigg (   \cos \Big ( { T \over 1 + T^2} \Big ) - T
    \sin \Big ( { T \over 1 + T^2} \Big )    \bigg ).
   \end{align*}
   In the case $\beta = 2$ this agrees with a result reported in \cite{ABGS20a}.
   As implied by (\ref{3.3b}), independent of $\beta$, the leading large $T$ form exhibited by each of the
   above is
    \begin{equation}\label{3.3c} 
      {P}_{\infty,s}^{(Cy)} (T) \Big |_{s=1} = {1 \over (1 + T^2)^2} \Big ( 1 + {\rm O} \Big ( {1 \over 1 + T^2} \Big ) \Big ).
  \end{equation}

   \end{remark} 
   
   Term-by-term integration in (\ref{3.2x}) also evaluates $ {P}_{\infty,s}^{(Cy)} (T) $ for general integers $s \ge 2$,
   however the resulting expression cannot be recognised as a stand alone special function.
 
  \begin{proposition}\label{P2.9}
Let $\kappa = (\kappa_1,\dots, \kappa_s)$ be a partition of $s$ parts.  For all integers $s \ge 1$ we have
    \begin{equation}\label{3.4a}  
  {P}_{\infty,s}^{(Cy)} (T)  =  {1 \over \pi}  \, {\rm Re} \, {1 \over 1 + i T} \,  \sum_\kappa {1 \over  [4s/\beta]_\kappa^{(\beta/2)}}  \Big ( {4 \over \beta (1 + i T)} \Big )^{|\kappa|}
  C_\kappa^{(\beta/2)}  ( (1)^s ).
  \end{equation}
  \end{proposition}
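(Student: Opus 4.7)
The plan is a direct term-by-term integration of the representation (\ref{3.2}). Starting from
\begin{equation*}
{P}_{\infty,s}^{(Cy)} (T) = \frac{1}{\pi}\,{\rm Re}\int_{0}^\infty e^{-t(1+iT)}\,{}_0^{}F_1^{(\beta/2)}(\underline{\;\;\;}\,;4s/\beta;(4t/\beta)^s)\,dt,
\end{equation*}
I would first use the defining series (\ref{3.40}) for the generalised hypergeometric function together with the homogeneity relation $C_\kappa^{(\beta/2)}((ct)^s)=(ct)^{|\kappa|}C_\kappa^{(\beta/2)}((1)^s)$, valid because $C_\kappa^{(\beta/2)}$ is a homogeneous symmetric polynomial of degree $|\kappa|$. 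This allows the $t$-dependence in each summand to be extracted as a single factor $t^{|\kappa|}$.

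Next I would interchange the sum over $\kappa$ and the integral over $t$. Once this is allowed, each remaining integral is the elementary Laplace transform
\begin{equation*}
\int_0^\infty t^{|\kappa|} e^{-t(1+iT)}\,dt = \frac{|\kappa|!}{(1+iT)^{|\kappa|+1}},
\end{equation*}
and the factor $|\kappa|!$ cancels precisely the $1/|\kappa|!$ present in (\ref{3.40}). Factoring out $1/(1+iT)$ and collecting the remaining $(4/\beta)^{|\kappa|}/(1+iT)^{|\kappa|}$ into a single power then reproduces the claimed expansion (\ref{3.4a}).

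The main obstacle is justifying the interchange of summation and integration, since for $s\ge 2$ the generalised hypergeometric function ${}_0F_1^{(\beta/2)}(\underline{\;\;\;}\,;4s/\beta;(4t/\beta)^s)$ grows faster than any polynomial as $t\to\infty$, so the integrand is not absolutely summable in the naive sense. The cleanest way to handle this is to argue that the partial sums in (\ref{3.40}) converge uniformly on compact $t$-intervals and to control the tail by exploiting that $P_{\infty,s}^{(Cy)}(T)$ is the Fourier transform of the bounded characteristic function (\ref{L1z+}); alternatively, one can insert a convergence factor $e^{-\varepsilon t}$, carry out the interchange for $\varepsilon>0$ where dominated convergence applies, and then pass to $\varepsilon\to 0^+$ using that both sides of the final identity are analytic functions of $T$ on $\{{\rm Im}\,T>-1\}$ and that the ${}_0F_1^{(\beta/2)}$ series for $(4/(\beta(1+iT)))^s$ converges in a neighbourhood of $T=0$ that covers the region where absolute convergence of the $\kappa$-sum holds. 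After this interchange is secured, the rest of the argument is a bookkeeping exercise of the type indicated above.
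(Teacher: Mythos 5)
Your proposal is correct and is essentially the paper's own argument: the paper simply asserts that Proposition \ref{P2.9} follows by term-by-term integration of (\ref{3.2}), which is exactly your computation using the series (\ref{3.40}), the homogeneity $C_\kappa^{(\beta/2)}((ct)^s)=(ct)^{|\kappa|}C_\kappa^{(\beta/2)}((1)^s)$, and $\int_0^\infty t^{|\kappa|}e^{-t(1+iT)}\,dt=|\kappa|!/(1+iT)^{|\kappa|+1}$. Your worry about the interchange is more elaborate than necessary: every summand is nonnegative for $t>0$, so Tonelli's theorem gives the interchange at once, the resulting $\kappa$-sum converging because $\sum_{|\kappa|=p}C_\kappa^{(\beta/2)}((1)^s)=s^p$ while $[4s/\beta]_\kappa^{(\beta/2)}$ grows factorially in $p=|\kappa|$.
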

  
    \begin{proof}
    Justification of the interchange of the integral with the sum has already been given in the proof of
    Proposition \ref{P2.8} in the case $s=1$.
    \end{proof}

  \begin{remark}\label{R3.4}
  Taking the inverse Fourier transform of (\ref{L1zg}) and integrating term-by-term
  (the validity of this is immediate since the sum is finite) shows that for finite $N$
  \begin{equation}\label{3.4aF}  
  {P}_{N,s}^{(Cy)} (T)  =  {1 \over \pi}  \, {\rm Re} \, {1 \over N + i T} \,  \sum_\kappa {[-N]_\kappa^{(\beta/2)} \over  [4s/\beta]_\kappa^{(\beta/2)}}  \Big ( -{4 \over \beta (N + i T)} \Big )^{|\kappa|}
  C_\kappa^{(\beta/2)}  ( (1)^s ).
  \end{equation} 
\end{remark}

\subsection{Proof of Theorem \ref{P1A}}

Define the moments $m_{N,s}^{(Cy)}(h)$ associated with $P_{N,s}^{(Cy)}(T)$ by (\ref{1.7b}).
Substitution of (\ref{3.4a}) together with knowledge of the
  integral evaluation \cite[Entry 3.194.3]{GR07}
     \begin{equation}\label{3.4b} 
     \int_0^\infty {x^{2 h} \over (1 - i x)^{k+1}} \, dx = - i \pi e^{i \pi h} {(-2h)_k \over k! \sin 2 \pi h}, \qquad k > 0, \, -1/2 < {\rm Re} \, h < k/2,
 \end{equation} 
 provides for the evaluation of these moments as a finite sum.    
   
   \begin{corollary}\label{C3.5}
Let $\kappa = (\kappa_1,\dots, \kappa_s)$ be a partition of $s$ parts.    For the parameter range (\ref{R}), and with $s \in \mathbb Z^+$, we have
  \begin{equation}\label{3.4d}   
 m_{N,s}^{(Cy)}(h) =   {N^{2h} \over \cos \pi h} \sum_{\kappa}  {[-N]_\kappa^{(\beta/2)} \over N^{|\kappa|}}
 {(-2 h)_{|\kappa|} \over | \kappa |! [4 s/ \beta]_\kappa^{(\beta/2)} }
 \Big ( {4 \over \beta} \Big )^{|\kappa|} C_\kappa ^{(\beta/2)}  ( (1)^s ).
  \end{equation}
  This expression has a removable singularity at $h$ half an odd positive integer, and is to be understood in the sense
  of analytic continuation at these points.
  \end{corollary}
  
  \begin{proof}
  Suppose first that $-1/2 < {\rm Re} \, h < 1/2$. Since each term in (\ref{3.4a}) decays at least as fast as $1/|T|^2$ as 
  $|T| \to \infty$, and furthermore the sum in finite, we can interchange the order of the integral and the sum, and
  the formula (\ref{3.4d}) results. This formula extends to the full range (\ref{R}) by analytic continuation.
  Thus the analyticity of $ m_{\infty,s}^{(Cy)}(h)$ in this range follows from the discussion below (\ref{1.7b}), while the
  RHS is in fact analyic throughout all of ${\rm Re} \, (h) > - 1/2$, with the points $h \in \mathbb Z_{\ge 0}$ being
  removable singularities.
   \end{proof}
  
We are now in a position to prove  Theorem \ref{P1A}. According to (\ref{1.6x}) and (\ref{1.7b}), using the notation therein,
we have
  \begin{equation}\label{7.7x}
  {\mathcal F}_{N,\beta}^{(Cy)} (s,h) = 2^{\alpha_0}  {\mathcal N_N^{(Cy)} \over C_{N,\beta}} m_{N,s}^{(Cy)}.
  \end{equation}
  In the case $s$ a non-negative integer, we can simplify
    \begin{equation}\label{7.7y}
    2^{\alpha_0}  {\mathcal N_N^{(Cy)} \over C_{N,\beta}}  = {1 \over 2^{2h} } 
    \prod_{j=1}^s {\Gamma( 2 (s + j)/\beta + N) \over \Gamma(2 j/\beta + N)} {\Gamma(2j/\beta) \over \Gamma(2(s+j)/\beta)}.
    \end{equation} 
For large $N$ the product over the ratio of the gamma functions involving $N$ has the   leading form $N^{2s^2/\beta}$, as
follows from the standard fact that for any fixed $u$ the leading large $N$ form of
$\Gamma(N+u)/\Gamma(N)$ is equal to $N^u$. We use
this in (\ref{7.7y}) and substitute the resulting expression in (\ref{7.7x}), telling us that
  \begin{equation}\label{7.7z}
  \lim_{N \to \infty} {1 \over N^{2s^2/\beta + 2h}}   {\mathcal F}_{N,\beta}^{(Cy)} (s,h) = {1 \over 2^{2h} }   \prod_{j=1}^s {\Gamma(2j/\beta) \over \Gamma(2(s+j)/\beta)}
   \lim_{N \to \infty} {1 \over N^{2h}}   m_{N,s}^{(Cy)}(h).
   \end{equation}

In the sum of (\ref{3.4d}) for $ m_{N,s}^{(Cy)}(h)$, from the definition (\ref{C0}) we see that we have the bound
\begin{equation}\label{3.12a}
| [-N]_\kappa^{(\beta/2)}/ N^{|\kappa|} | \le r_1^{|\kappa|}
\end{equation}
for some $r_1 > 1$ independent of $N$. In addition, as a consequence of (\ref{J3Z}) and the fact that $C_\kappa^{(\alpha)}((1)^s)$
is non-negative, we have the bound
\begin{equation}\label{3.12b}
|  C_\kappa^{(\alpha)}((1)^s) | \le s^{|\kappa|}.
\end{equation}
Substituting (\ref{3.12a}) and  (\ref{3.12b}) in the sum in (\ref{3.4d}) shows that the absolute value of the
latter is bounded by the convergent sum (since for large $| \kappa |$ the leading asymptotics of $(-2h)_{|\kappa|}| /|\kappa|!$
is $|\kappa|^{-2h - 1}/\Gamma(-2h)$, the definition
(\ref{C0}) and Stirling's formula shows that the decay of the summand is to leading order
 like the reciprocal of a gamma function in each part $\kappa_i$)
\begin{equation}\label{3.12c}
\sum_{\kappa} { |(-2h)_{|\kappa|}| \over | \kappa|! [4s/\beta]_\kappa^{(\beta/2)} } r_2^{|\kappa|}
\end{equation}
for some $r_2 > 0$ and independent of $N$. Consequently, by dominated convergence, the $N \to \infty$ limit of the sum
in (\ref{3.4d})  can be taken term-by-term, and we obtain the limit formula
\begin{equation}\label{3.12d}
   \lim_{N \to \infty} {1 \over N^{2h}}   m_{N,s}^{(Cy)}(h) = 
 {1 \over \cos \pi h }   \sum_{\kappa} 
 {(-2 h)_{|\kappa|} \over | \kappa |! [4 s/ \beta]_\kappa^{(\beta/2)} }
 \Big ( {4 \over \beta} \Big )^{|\kappa|} C_\kappa ^{(\beta/2)}  ( (1)^s ).
 \end{equation}
 Substituting this in (\ref{7.7z}) gives  (\ref{1.91}).

Note that this working also shows that in the range (\ref{R}),
\begin{equation}\label{3.12e}
m_{\infty,s}^{(Cy)}(h) :=   \int_{-\infty}^\infty   |T|^{2h} P_{\infty,s}^{(Cy)} =
\lim_{N \to \infty} {1 \over N^{2h}}   m_{N,s}^{(Cy)}(h),
\end{equation}
as is suggested by the scaling in (\ref{3.1}), and thus has the evaluation (\ref{3.12b}).

\begin{remark}\label{R3.6}
For a given $h \in \mathbb Z_{\ge 0}$, the sum in (\ref{3.12d}) terminates. The explicit evaluation of $C_\kappa^{(\beta/2)}((1)^s)$
implied by (\ref{2.6+}) then tells us that the RHS of (\ref{3.12d}) is a rational function of $s$. For example, with $h=1$, we compute
\begin{equation}\label{3.19x}
m_{\infty,s}^{(Cy)}(h)  \Big |_{h=1} = {1 \over (2 s - 1) (4s/\beta + 1) },
\end{equation}
where use has been made of (\ref{3.12e}) on the LHS. The derivation of (\ref{3.12d}) generally required $s$ to be a non-negative integer such that
${\rm Re} \, h < s + 1/2$, and thus in (\ref{3.19x}) that $s \in \mathbb Z^+$. Yet (\ref{3.19x}) is well defined for all real $s > 1/2$, suggesting
that it in fact holds in that domain. For $\beta = 2$, this statement, and its generalisation for all $h \in \mathbb Z_{\ge 0}$, has been proved true
in the recent work \cite{AKW20}.

For general $\beta > 0$, we can also establish the validity of the RHS of (\ref{3.12d}) for a fixed $h \in \mathbb Z^+$ and continuous $s > h - 1/2$.
We simply observe that in the evaluation formula 
 (\ref{3.4d})   for
 $m_{N,s}^{(Cy)}(h) $, both the LHS (by considering its definition (\ref{1.7b}) and the RHS (using the fact that it is a finite sum, and
 inspecting the analytic form of each term as a function of $s$) that each are well defined in the complex plane
 for ${\rm Re} \, s > h - 1/2$, and moreover are bounded in this half space. Application of Carlson's theorem as used in the
 proof of Corollary \ref{C1} then tells us that the evaluation formula  (\ref{3.4d}) is, with  $h \in \mathbb Z^+$ fixed, value for
 continuous $s > h - 1/2$. Being a finite sum with number of terms independent of $N$, the scaled limit required by (\ref{3.12e}) can be taken term by term, giving
 the formula as implied by (\ref{3.12d}) in the setting $h \in \mathbb Z^+$ fixed, but now valid for  continuous $s > h - 1/2$. 
 \end{remark}

\section{The average (\ref{1.7+})}\label{S4}
Up to normalisation, (\ref{J1}) is the PDF specifying the Jacobi $\beta$-ensemble. Denote an
 average with respect to this ensemble by $\langle \cdot \rangle_{(\lambda_1,\lambda_2,\beta)}^{(J)}$. We see that $\mathcal H_N^{(J)}(h)$
 as defined by  (\ref{1.7+}) can be written as a Jacobi $\beta$-ensemble average according to
 \begin{equation}\label{Se2}
\mathcal H_N^{(J)}(h) =  \Big \langle \Big ( \sum_{j=1}^N {1 \over u_j} \Big )^{h} 
\Big \rangle_{(a ,b,\beta)}^{(J)}.
  \end{equation} 
  This shows that $\mathcal H_N^{(J)}(h)$  is proportional to the $h$-th moment of the singular statistic $\sum_{j=1}^N 1/ u_j$
  for the Jacobi $\beta$-ensemble.
  
  A variant is to change variables $x_l \mapsto {1 \over 2} ( 1 - \cos \theta_l )$ in (\ref{J1}) to obtain the density 
   \begin{equation}\label{Se3}
 {1 \over \tilde{S}_N(a,b,\beta/2)} \prod_{l=1}^N (1 - \cos \theta_l)^{a  + 1/2}  (1 +   \cos \theta_l)^{b+1/2}
  \prod_{1 \le j < k \le N} | \cos \theta_k - \cos \theta_j |^\beta,
 \end{equation} 
 for appropriate normalisation $\tilde{S}(a,b,\beta)$ simply related to the Selberg integral.
 We have that
 \begin{equation}\label{Se4}
\mathcal H_N^{(J)}(h) \propto 2^{h}   \Big \langle \Big ( \sum_{j=1}^N {1 \over 1 - \cos \theta_j } \Big )^{h}
\Big \rangle_{(a ,b,\beta)}^{(\tilde{J})}.
\end{equation} 
The interest in this variant is that for $\beta = 2$ the PDF (\ref{Se3}) includes for particular $a,b$ the eigenvalue PDF
of the classical orthogonal and unitary symplectic matrix groups with Haar measure; see e.g.~\cite[Prop.~3.7.1]{Fo10}.
Thus then $\mathcal H_N^{(J)}(h) $ is proportional to the $2h$-th moment of the singular statistic $\sum_{j=1}^N 1/ (1 - \cos \theta_j)$; the
analysis of the moments for a related singular statistic of these classical groups has recently been the subject
of the study \cite{AS20}.

The characteristic function for the singular statistic relating to (\ref{Se2}) is
 \begin{equation}\label{Se5}
 \hat{P}_{N}^{(J)}(t) :=  \Big \langle \exp \Big ( i t \sum_{j=1}^N {1 \over u_j} \Big ) \Big \rangle_{({a},b,\beta)}^{(J)}.
\end{equation} 
To this author's knowledge this quantity, specialised to $\beta = 1$, first arose in a study of Davis \cite{Da68} into the distribution of
Hotelling's generalised $T_0^2$ statistic. Davis makes use of matrix differential equations to encode recursive
properties of the power series expansion of  $ \hat{P}_{N}^{(J)}(t)$ and the corresponding distribution function. At a practical level for applications in 
statistics, this leads to an efficient and accurate computational scheme to numerically evaluate the PDF of the distribution for fixed values of
the parameters, including $N$. However it does not give information on the large $N$ scaled limit.

In the study of (\ref{1.7c}), enabling progress in analysing the large $N$ scaled limit at the level of the underlying
distribution function, has been the analytic structure
of (\ref{2.30a}) in the case $s$ a positive integer. Consideration of the $N = 1$ case, when after a change of variables the problem
reduces to computing the  characteristic function of the univariate $F$-distribution \cite{Ph82},
shows that no such property holds
for (\ref{Se5}) (here the role of $s$ might be expected to be played by $a$), so computation of the explicit form of the limiting scaled distribution of  $\sum_{j=1}^N 1/u_j$ seems intractable.
On the other hand, some progress is possible at the level of moments, allowing (\ref{Se2}) to be computed for $h$ a non-negative
integer such that it is well defined.  An explicit formula for the mean is in fact
already known from previous literature. Thus we have \cite[Eqns.~(33), (34), (36) with $k=\lambda_1=1$]{FLD16}
(see also \cite{MRW15})
 \begin{equation}\label{FD}
  \Big \langle  \sum_{j=1}^N {1 \over u_j}  \Big \rangle_{({a},b,\beta)}^{(J)} = N {{a} + b + 1 + \beta (N - 1)/2 \over \hat{a}}.
\end{equation}  
This can be generalised, allowing for the computation of the scaled large $N$ limit.

  \begin{proposition}\label{P4.1}
 Let $p \in \mathbb Z^+$ with the further constraint $p < {a} +1$. Let
  $\ell(\kappa)$ denote the number of nonzero parts in $\kappa$. We have
    \begin{multline}\label{B1}
  \Big \langle  \Big ( \sum_{j=1}^N {1 \over u_j} \Big )^p \Big \rangle_{({a},b,\beta)}^{(J)} =
   \sum_{|\kappa| = p} 
   C_\kappa^{(2/\beta)}((1)^N) \\
    \times   \prod_{j=0}^{\ell(\kappa) - 1} { \Gamma( {a} + \beta j / 2 + 1 - \kappa_{j}) \over \Gamma( {a} + \beta j / 2 + 1)} 
    {     \Gamma( {a} + \beta ( N +j - 1)/ 2 + 2)  \over  \Gamma( {a} + \beta(N+ j-1) / 2 + 2 - \kappa_{j}) }.
    \end{multline}
    Consequently, in the notation of (\ref{2.6+}) with $\alpha = 2/\beta$,
    \begin{equation}\label{B2}
   \lim_{N \to \infty} {1 \over N^{2p} }  \Big \langle  \Big ( \sum_{j=1}^N {1 \over u_j} \Big )^p \Big \rangle_{({a},b,\beta)}^{(J)} =
      \sum_{|\kappa| = p}  |\kappa|!   {1 \over d_\kappa' h_\kappa}
        \prod_{j=0}^{\ell(\kappa) - 1} { \Gamma( {a} + \beta j / 2 + 1 - \kappa_{j}) \over \Gamma( {a} + \beta j / 2 + 1)}. 
   \end{equation}       
   \end{proposition}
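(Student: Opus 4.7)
\emph{Plan for the proof.} My approach is to expand the power via the Jack polynomial identity (\ref{J3Z}), applied to the arguments $x_l = 1/u_l$:
\[
\Bigl(\sum_{j=1}^N \tfrac{1}{u_j}\Bigr)^p \;=\; \sum_{|\kappa|=p} C_\kappa^{(2/\beta)}\bigl(1/u_1,\dots,1/u_N\bigr).
\]
This reduces (\ref{B1}) to a sum, over partitions $\kappa$ of $p$, of the Jacobi $\beta$-ensemble averages $\langle C_\kappa^{(2/\beta)}(1/u_1,\dots,1/u_N)\rangle_{(a,b,\beta)}^{(J)}$. The hypothesis $p < a+1$ ensures that each $u_l^{-\kappa_l}$ appearing is integrable against $u_l^{a}$ near the origin, so the term-by-term manipulations are justified.

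\emph{Main technical step.} I would evaluate the reciprocal-argument Jack average by invoking the Jack polynomial inversion identity
\[
P_\kappa^{(2/\beta)}(x_1^{-1},\dots,x_N^{-1}) \;=\; (x_1\cdots x_N)^{-\kappa_1}\, P_{\kappa^{*}}^{(2/\beta)}(x_1,\dots,x_N),
\]
with $\kappa^{*} = (\kappa_1-\kappa_N,\dots,\kappa_1-\kappa_2,0)$, which rewrites the problem as a Jacobi average of a Jack polynomial in the $u_l$'s, but now against the shifted weight $\prod_l u_l^{a-\kappa_1}(1-u_l)^b$. Applying the Macdonald--Kadell--Kaneko formula (\ref{15.pse1}) with $\lambda_1$ replaced by $a-\kappa_1$ gives an explicit value in terms of generalized Pochhammer symbols indexed by $\kappa^{*}$, which must then be combined with the Selberg ratio $S_N(a-\kappa_1,b,\beta)/S_N(a,b,\beta)$ coming from the re-normalisation and with the combinatorial conversion between $C_{\kappa^{*}}((1)^N)$ and $C_\kappa((1)^N)$ via (\ref{2.6+}). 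After regrouping, one should obtain the product of gamma functions displayed in (\ref{B1}); an alternative route, should the first be combinatorially awkward, is to match coefficients of $y$ in Kaneko's integral (\ref{2.40}), continued analytically to negative integer $s$, using the identity $\prod_l (1-y/u_l)^{-r} = (\prod u_l^r)/\prod(u_l-y)^r$ and the Jack-basis expansion of the LHS in $C_\sigma^{(2/\beta)}(1/u_1,\dots,1/u_N)$.

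\emph{Asymptotic (\ref{B2}).} Once (\ref{B1}) is established, the scaling is routine. Each of the $\ell(\kappa)$ factors $\Gamma(a+\beta(N+j-1)/2+2)/\Gamma(a+\beta(N+j-1)/2+2-\kappa_j)$ is asymptotic to $(\beta N/2)^{\kappa_j}$, contributing in total $(\beta/2)^{|\kappa|}N^{|\kappa|}$, while (\ref{2.6+}) together with $b_\kappa\sim N^{|\kappa|}$ yields $C_\kappa^{(2/\beta)}((1)^N) \sim (2/\beta)^{|\kappa|}|\kappa|!\,N^{|\kappa|}/(d_\kappa' h_\kappa)$. The factors $(2/\beta)^{|\kappa|}$ and $(\beta/2)^{|\kappa|}$ cancel, leaving $N^{2|\kappa|}|\kappa|!/(d_\kappa' h_\kappa)$ multiplied by the remaining $N$-independent gamma ratios involving $a$; dividing by $N^{2p}$ gives (\ref{B2}).

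\emph{The main obstacle} I anticipate is the combinatorial simplification in the main technical step: expressing the generalised Pochhammer symbols $[\,\cdot\,]_{\kappa^{*}}^{(2/\beta)}$ together with the product of gamma functions coming from $S_N(a-\kappa_1,b,\beta)/S_N(a,b,\beta)$ in a form indexed by $\kappa$ rather than $\kappa^{*}$. This requires careful shuffling of arm-/leg-length products via (\ref{2.6a+})--(\ref{2.6b+}) and possibly the use of gamma reflection to convert $\Gamma(a+\beta j/2+1-\kappa_j)/\Gamma(a+\beta j/2+1)$ into the appropriate Pochhammer form; this is precisely where the partition-dependent structure of the final formula is forced.
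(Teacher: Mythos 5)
Your plan coincides with the paper's own proof: it too expands the power via (\ref{J3Z}), converts the reciprocal-argument Jack polynomial into one in the $u_l$'s against the shifted weight $\prod_l u_l^{a-q}(1-u_l)^b$ (the paper's (\ref{C22}), which is your inversion identity taken with $q=\kappa_1$ but stated directly in the $C$-normalisation so that no $P$-to-$C$ conversion factor intervenes), evaluates the result with the Macdonald--Kadell--Kaneko integral (\ref{15.pse1}), and regroups the resulting generalised Pochhammer symbols and Selberg-integral ratio into the gamma quotients of (\ref{B1}). Your large-$N$ analysis yielding (\ref{B2}) is likewise identical to the paper's, so the only remaining work is the combinatorial bookkeeping you already flagged as the main obstacle.
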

  
  \begin{proof}
  Let $p$ be a non-negative integer. According to (\ref{J3Z}) we have
   \begin{equation}\label{C21}
   \Big \langle  \Big ( \sum_{j=1}^N {1 \over u_j} \Big )^p \Big \rangle_{({a},b,\beta)}^{(J)} =
   \sum_{|\kappa| = p} 
    \Big \langle   C_\kappa^{(2/\beta)}\Big ( {1 \over u_1}, \dots, {1 \over u_N} \Big )    \Big  \rangle_{({a},b,\beta)}^{(J)}.
    \end{equation}
 We know \cite[Exercises 12.1 q.1 \& q.2]{Fo10} that for $q \in \mathbb Z^+$, $q \ge \kappa_1$
   \begin{equation}\label{C22}   
   C_\kappa^{(2/\beta)}\Big ( {1 \over u_1}, \dots, {1 \over u_N} \Big )   =
   C_{-\kappa^R}^{(2/\beta)}(u_1,\dots,u_N) = \prod_{l=1}^N u_l^{-q} C_{(q)^N - \kappa^R}(u_1,\dots,u_N).
  \end{equation}
  In (\ref{C22}), $- \kappa^R$ denotes the partition $\kappa$, including the zero parts, with order reversed and each
  part multiplied by $-1$. So, for example, if $N=4$ and $\kappa = (2,1,0,0)$ then $-\kappa^R = (0,0,-1,-2)$ and
  $(q)^N - \kappa^R = (q,q,q-1,q-2)$.
  
  According to (\ref{15.pse1})
  \begin{multline}\label{C23-}
  \Big \langle   \prod_{l=1}^N u_l^{-q} C_{(q)^N - \kappa^R}(u_1,\dots,u_N)   \Big \rangle_{({a},b,\beta)}^{(J)} = {S_N({a} - q, b, \beta) \over
 S_N({a} , b, \beta) } \\
 \times   C_\kappa^{(2/\beta)}((1)^N)  {[ {a} - q + \beta (N - 1)/2  + 1]_{(q)^N - \kappa^R}^{(2/\beta)} \over
[ {a} - q +b +  \beta (N - 1)  + 2]_{(q)^N - \kappa^R}^{(2/\beta)}  }.
\end{multline}
Here use has been made of the fact that $ C_{-\kappa^R}^{(2/\beta)}((1)^N)  =  C_\kappa^{(2/\beta)}((1)^N)$ as follows from
(\ref{C22}).
From the definition (\ref{C0}) we can check that
  \begin{equation}\label{C23}   
  \prod_{j=0}^{N-1} { \Gamma ( {a} - q + 1 + j \beta/ 2) \over    \Gamma ( {a}  + 1 + j \beta/ 2)  } \,
  [ {a} - q +b +  \beta (N - 1)  + 2]_{(q)^N - \kappa^R}^{(2/\beta)} =
  \prod_{j=0}^{\ell(\kappa) - 1} { \Gamma( {a} + \beta j / 2 + 1 - \kappa_{N-j}^R) \over \Gamma( {a} + \beta j / 2 + 1)}.
  \end{equation}
The relevance of the first ratio of gamma functions comes from making use of
(\ref{3.2g}) in relation to the ratio of Selberg integrals in (\ref{C23-}).
  Using this as written, and also with ${a} \mapsto {a} + \beta (N - 1)/2 + 1$, and noting too that $\kappa_{N-j}^R = \kappa_j$ we see that (\ref{C23}) reduces to
  (\ref{B1}).
  
  Using the standard asymptotic formula for the ratio of gamma functions, and making use too of (\ref{2.6+}) and
 (\ref{2.6a+}), we see that  (\ref{B2}) is a direct consequence of (\ref{B1}).

 \end{proof}     
 
 \begin{remark}
 1.~The averages $  \Big \langle  \Big ( \sum_{j=1}^N  u_j \Big )^p \Big \rangle_{({a},b,\beta)}^{(J)}$, which in words are the
 moments of the trace in the Jacobi $\beta$-ensemble, have been studied in the recent work \cite{FK20}, and the analogue
 of (\ref{B1}) has been given. In addition, it has been shown that these moments satisfy a matrix recurrence with respect to $p$, where the matrices
 involved are of size $(N + 1) \times (N + 1)$. For the averages  (\ref{B1}), the results in \cite{Da68} imply an analogous matrix recurrence, at least
 for the case $\beta = 1$. \\
 2~Denote the LHS of (\ref{B2})  by $\mu_{\infty}^{(J)}(p)$, and set $\alpha = 2/ \beta$. Evaluation of the RHS shows
  \begin{equation}\label{C23+} 
  \mu_{\infty}^{(J)}(p) \Big |_{p=1} = {1 \over \alpha {a}},  \quad    \mu_{\infty}^{(J)}(p) \Big |_{p=2} =
 {1 \over \alpha^2 ( \alpha + 1) } {1 \over \hat{a} ({a} - 1)} +
 {1 \over \alpha ( \alpha + 1)} {1 \over \hat{a} ( {a} + 1/\alpha) }. 
 \end{equation} 
 In the case $\beta = 2$ (and thus $\alpha = 1$), these formulas are consistent with known results \cite[Eq.~(3.2)]{DFX19}.
 It is furthermore the case that for $\beta = 2$ the exponential generating function for the
 sequence of cumulants  $\{   \mu_{\infty}^{(J)}(p)  \}_{p=1}^\infty$ corresponding to the moments
  satisfies a nonlinear differential equation \cite{XDZ14,DFX19}. \\
 3.~Let $ \langle  \cdot  \rangle_{({a},\beta)}^{(L)}$ denote an average with respect to the Laguerre $\beta$-ensemble specified by
 the PDF proportional to
  \begin{equation}\label{L1k}
 \prod_{j=1}^N x_j^a  e^{- x_j}  |\Delta_N(\mathbf{x})|^{\beta}, \qquad x_j \in \mathbb R^+ \: (j=1,\dots,N)
 \end{equation}
 and define
  \begin{equation}\label{4.8d}
 \mu_{\infty}^{(L)}(p) := \lim_{N \to \infty} {1 \over N^p}   \Big \langle  \Big ( \sum_{j=1}^N  u_j \Big )^p \Big \rangle_{({a},\beta)}^{(L)}.
 \end{equation}
 The method of proof of Proposition \ref{P4.1} can be adapted to show
 $$
  \mu_{\infty}^{(L)}(p) = \mu_{\infty}^{(J)}(p).
  $$
  This is to be expected, as the singular linear statistic $ \sum_{j=1}^N  1/u_j$ probes the neighbourhood of the origin,
  which in both ensembles is characterised by the same factor $\prod_{l=1}^N u_l^a |\Delta(\mathbf a)|^\beta$ in the corresponding
  PDFs. The scaling $1/N^{2p}$ in (\ref{B2}), in contrast to the scaling $1/N^p$ in (\ref{4.8d}), is due to the typical spacing between
  eigenvalues in the neighbourhood of the origin being of order $1/N^2$ ($1/N$) for the Jacobi (Laguerre) ensemble. 
\end{remark}
 
 The result (\ref{B2}) from Proposition \ref{P4.1} allows for the scaled limit of (\ref{1.7+}) with $h$ a non-negative integer
 to be read off.

 \begin{corollary}\label{C4}
 Define $\mathcal H^{(J)}(h)$ by (\ref{1.7+}). In the notation of Proposition \ref{P4.1}, and with $h$ a non-negative integer
 such that $h < {a} +1$, we have
 \begin{equation}
 \lim_{N \to \infty}  {1 \over N^{2h}} \mathcal H_N^{(J)}(h) =  \sum_{|\kappa| = h}  |\kappa|!   {1 \over d_\kappa' h_\kappa}
        \prod_{j=0}^{\ell(\kappa) - 1} { \Gamma( {a} + \beta j / 2 + 1 - \kappa_{j}) \over \Gamma( {a} + \beta j / 2 + 1)}. 
 \end{equation}
 \end{corollary}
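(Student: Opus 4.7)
The plan is to reduce the claim directly to Proposition \ref{P4.1}; essentially no new work is required beyond identifying $\mathcal{H}_N^{(J)}(h)$ with the Jacobi $\beta$-ensemble moment considered there.

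First I would unpack the definition (\ref{1.7+}) of $\mathcal{H}_N^{(J)}(h)$. Using (\ref{1.7a+}), a direct differentiation gives
\begin{equation*}
\frac{d}{dt}\log R(t)\Big|_{t=0} = \sum_{l=1}^N \frac{1}{t - u_l}\Big|_{t=0} = -\sum_{l=1}^N \frac{1}{u_l}.
\end{equation*}
Since the integration variables $u_l$ lie in $(0,1)$, the sum $\sum_l 1/u_l$ is strictly positive, so the absolute value is superfluous and
\begin{equation*}
\Big|\tfrac{d}{dt}\log R(t)\big|_{t=0}\Big|^{h} = \Big(\sum_{l=1}^N \frac{1}{u_l}\Big)^{h}.
\end{equation*}
Combining this with the Selberg normalisation in the denominator reproduces precisely the Jacobi average (\ref{Se2}), namely
\begin{equation*}
\mathcal{H}_N^{(J)}(h) = \Big\langle \Big(\sum_{j=1}^N \tfrac{1}{u_j}\Big)^{h} \Big\rangle_{(a,b,\beta)}^{(J)}.
\end{equation*}

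Next, with $h$ a non-negative integer satisfying $h < a+1$, the hypotheses of Proposition \ref{P4.1} hold with $p = h$. Dividing both sides of (\ref{B2}) by $N^{2h}$ and passing to the limit yields directly
\begin{equation*}
\lim_{N \to \infty} \frac{1}{N^{2h}} \mathcal{H}_N^{(J)}(h) = \sum_{|\kappa|=h} |\kappa|!\, \frac{1}{d'_\kappa h_\kappa} \prod_{j=0}^{\ell(\kappa)-1} \frac{\Gamma(a + \beta j/2 + 1 - \kappa_j)}{\Gamma(a + \beta j/2 + 1)},
\end{equation*}
which is the claimed identity.

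There is no substantive obstacle: all of the analytic content, including the Jack-polynomial expansion of the power sum via (\ref{J3Z}), the Kadell--Kaneko integral (\ref{15.pse1}), and the gamma-ratio asymptotics, has already been dispatched in establishing Proposition \ref{P4.1}. The only point that merits a brief remark is the positivity observation that eliminates the absolute value, which is why the corollary is stated only for non-negative integer $h$ (and the same upper bound $h < a + 1$ needed to ensure convergence of the Jacobi average near the origin, as in Proposition \ref{P4.1}).
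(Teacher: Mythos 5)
Your proposal is correct and follows exactly the route the paper takes: the identification $\mathcal{H}_N^{(J)}(h) = \langle (\sum_j 1/u_j)^h \rangle_{(a,b,\beta)}^{(J)}$ is precisely (\ref{Se2}), and the paper then simply reads off the corollary from (\ref{B2}) of Proposition \ref{P4.1} with $p=h$. No discrepancies to report.
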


 \section*{Acknowledgements}
This work was supported by the Australian Research Council (ARC) Grant DP210102887 and
the ARC Centre of Excellence for Mathematical and Statistical Frontiers. The development of this
work has greatly benefitted from the constructive suggestions, the generous sharing of knowledge,
of the referees.

\appendix
\section*{Appendix}\label{A1}
\renewcommand{\thesection}{A} 
\setcounter{equation}{0}
For $\beta = 2$, when the exponential generating function (\ref{1.4a}) relates to 
determinantal structures such as (\ref{1.6}) and (\ref{1.6a}), there are known characterisations
in terms of Painlev\'e transcendents; see e.g.~\cite[Ch.~8\&9]{Fo10} for an account of results of
this type in random matrix theory. Here we want to use this characterisation to exhibit the small
$t$ analytic structure for continuous $s$. Fundamental to our exact calculations for positive integer
$s$ has been that this structure is given by an exponential times a power series in $t$. For $\beta = 2$,
we will see that for continuous $s$, an in general infinite number of fractional powers are also present.

The $s \in \mathbb Z^+$ determinantal evaluation (\ref{1.6a}) of the exponential generating function
(\ref{1.4a}) was first identified in the work \cite{CRS06}. Soon after, in \cite{FW06a}, it was observed that
up to an exponential factor, this same expression appears as a $\tau$-function in the Hamiltonian theory
of Painlev\'e III$'$ known from \cite{FW01a}. Thus an alternative evaluation of (\ref{1.6a}) is, under the
assumption $s \in \mathbb Z_{\ge 0}$,
\begin{equation}\label{1.6b}
\lim_{N \to \infty} {1 \over \tilde{C}_N} \mathcal H_N^{(Cy)}(s;t/2N) = e^{t/2} \tilde{E}_s^{\rm hard}(t), \qquad
 \tilde{E}_s^{\rm hard}(t) := \exp \Big ( - \int_0^{4t} ( \sigma_{\rm III'}(x) + s^2  ) \, {dx \over x} \Big ),
 \end{equation}
 where $ \sigma_{\rm III'}(x) =  \sigma_{\rm III'}(x;s) $ satisfies the particular $\sigma$-Painlev\'e III$'$ equation
 \begin{equation}\label{A.2a}
 (x \sigma_{\rm III'}'')^2 +  \sigma_{\rm III'}' (4   \sigma_{\rm III'} - 1) (   \sigma_{\rm III'} - x   \sigma_{\rm III'}') - {s^2 \over 16} = 0,
  \end{equation}
  subject to the small $x$ boundary condition
   \begin{equation}\label{A.2b}
   \sigma_{\rm III'}(x) = - s^2 + {x \over 8} + {\rm O}(x^2), \qquad s \in \mathbb Z_{\ge 0}.
  \end{equation}    
  
  In  \cite{FW06a} it was shown that $ \tilde{E}_s^{\rm hard}(t)$ also results as a scaling limit of the Laguerre unitary
  ensemble ($\beta = 2$ case of (\ref{L1k}))
    \begin{equation}\label{A.3}
   \tilde{E}_{N,s}(t)  \propto   \Big \langle   \prod_{l=1}^N (1 - \chi_{(0,t)}^{(l)}) ( x_l - t)^s  \Big \rangle_{(s,\beta)}^{(L)} \Big |_{\beta = 2},
   \end{equation}  
   where $ \chi_{(0,t)}^{(l)} = 1$ for $ x_l  \in (0,t)$, $ \chi_{(0,t)}^{(l)} = 0$ otherwise, and the proportionality is such that
   both sides equal unity when $t=0$. Thus
   \begin{equation}\label{A.4}  
     \tilde{E}_s^{\rm hard}(t)  = \lim_{N \to \infty} \tilde{E}_{N,s}(t/N). 
  \end{equation}       
    From the work of Winn \cite{Wi12} we know that (\ref{A.3}) is a transformed version of the exponential generating function
 (\ref{1.4a}) valid for continuous $s$. For continuous $s$, the quantity   $ \tilde{E}_{N,s}(t)$ was identified
 as a $\tau$-function in the Hamiltonian theory of Painlev\'e V in \cite{FW01a}, and its scaled limit was
 shown formally to give rise to the same $\sigma$-Painlev\'e III$'$ equation as for the $s \in \mathbb Z_{\ge 0}$ case,
 (\ref{A.2a}), and to the same functional form (\ref{1.6b}). Very recently a rigorous demonstration of the validity of
 (\ref{1.6b}) for continuous $s$ has been given in \cite{ABGS20a}.
 
 To make use of the knowledge that (\ref{1.6b}) remains valid for $s$ continuous, for our aim of determining the analytic form
 $ \tilde{E}_s^{\rm hard}(t)$,
 requires that the boundary condition
 (\ref{A.2b}) of the $\sigma$-Painlev\'e III$'$ equation (\ref{A.2b}) be appropriately extended.
 This problem was solved in \cite{FW06a}, by the strategy of 
 starting with (\ref{A.3}) rewritten as a certain Hankel determinant with elements given in
 terms of the hypergeometric ${}_1 F_1$ function, then expanding the latter for small $t$, which allows the small
 $t$ form of $ \tilde{E}_{N,s}(t)$ to be systematically determined. The small $t$ form of $  \tilde{E}_s^{\rm hard}(t) $
 then follows by taking the limit term-by-term. With the latter expressed in terms of $ \sigma_{\rm III'}(x)$ according to
 (\ref{1.6b}) 
this procedure shows  \cite[Eq.~(4.2) with $s \mapsto x$, $\mu = a = s$, $\xi = 1$]{FW06a}
   \begin{equation}\label{A.5}  
 \sigma_{\rm III'}(x) =  \Big ( - s^2 + {x \over 8} + {\rm O}(x^2) \Big ) + c_s  x^{2s+1} ( 1 + {\rm O}(x) ) + {\rm O}( x^{2 (2s+1)}),
  \end{equation} 
  where
    \begin{equation}\label{A.5a}    
    c_s := \Big ( \lim_{\epsilon \to 0} {\sin \pi s \over \sin (2 \pi s + \epsilon) } \Big ) {(\Gamma(s+1))^2 \over \Gamma(2(s+1)) \Gamma^2(2s+1)}.
  \end{equation}     
  The limit in (\ref{A.5a}) diverges for $2s$ half an odd positive integer. In these cases $\log x$ terms  appear in the functional form
  (\ref{A.5}).
  
  Substituting (\ref{A.5}) in the $\sigma$-Painlev\'e III$'$ equation (\ref{A.2b}) reveals that about the origin the
 transcendent $ \sigma_{\rm III'}(x)$, for  $2s$ not half an odd positive integer,
 exhibits the Puiseux-type series solution
   \begin{equation}\label{A.6} 
  \sigma_{\rm III'}(x) = \sum_{j=0}^\infty x^{j (2 s + 1)} g_j(x),
    \end{equation}  
    where each $  g_j(x)$ is a power series in $x$. Substituting in  (\ref{1.6b}) gives that this same analytic structure persists
    for $ \tilde{E}_s^{\rm hard}(t)$. Moreover, in the case that $s$ is a non-negative integer, we know from (\ref{A.2b}) that only
    the term $j=0$ in the sum contributes, showing how this case is special.


\providecommand{\bysame}{\leavevmode\hbox to3em{\hrulefill}\thinspace}
\providecommand{\MR}{\relax\ifhmode\unskip\space\fi MR }
\providecommand{\MRhref}[2]{%
  \href{http://www.ams.org/mathscinet-getitem?mr=#1}{#2}
}
\providecommand{\href}[2]{#2}

\end{document}